\newfont{\bb}{msbm10 scaled 1100}
\newcommand{\RR}{\mbox{\bb R}}
\newtheorem{lemma}{Lemma}
\newtheorem{proposition}{Proposition}
\begin{document}
\title{Power Allocation in Two-Hop Amplify-and-Forward MIMO Relay Systems with QoS requirements}
\author{Luca~Sanguinetti, \emph{Member, IEEE}, and~Antonio A.~D'Amico.\thanks{
Copyright \copyright 2012 IEEE. Personal use of this material is permitted. However,
permission to use this material for any other purposes must be obtained from the
IEEE by sending a request to pubs-permissions@ieee.org.
\newline \indent {This research was supported in part by the Seamless Aeronautical Networking through
integration of Data links, Radios, and Antennas (SANDRA) project co-funded by the European Commission within the ``Cooperation Programme'' GA No. FP7- 233679.}
\newline \indent This paper has been submitted in part at the Fourth International Workshop on Computational Advances in Multi-Sensor Adaptive Processing (CAMSAP2011), Puerto Rico, 2011.
\newline \indent The authors are with the
University of Pisa, Department of Information Engineering, Via
Caruso 56126 Pisa, Italy (e-mail: \{luca.sanguinetti, a.damico\}@iet.unipi.it).}}
%

\maketitle

\begin{abstract}
The problem of minimizing the total power consumption while satisfying different quality-of-service (QoS) requirements in a two-hop multiple-input multiple-output network with a single non-regenerative relay is considered. As shown by Y. Rong in \cite{Rong2011}, the optimal processing matrices for both linear and non-linear transceiver architectures lead to the diagonalization of the source-relay-destination channel so that the power minimization problem reduces to properly allocating the available power over the established links. Unfortunately, finding the solution of this problem is numerically difficult as it is not in a convex form. To overcome this difficulty, existing solutions rely on the computation of upper- and lower-bounds that are hard to obtain or require the relaxation of the QoS constraints. In this work, a novel approach is devised for both linear and non-linear transceiver architectures, which allows to closely approximate the solutions of the non-convex power allocation problems with those of convex ones easy to compute in closed-form by means of multi-step procedures of reduced complexity. Computer simulations are used to assess the performance of the proposed approach and to make comparisons with alternatives.
\end{abstract}
\begin{keywords}
MIMO, non-regenerative relay, power allocation, majorization theory, transceiver optimization, quality-of-service requirements, non-convex optimization, power consumption, closed-form solution, decision-feedback equalizer.
\end{keywords}

\section{Introduction}

\PARstart{T}{he} demand for high-speed and high-quality multimedia services in wireless communication systems has increased significantly over the last years.
This has led to a strong interest in multiple-input multiple-output (MIMO) technologies as they represent a promising solution to improve the system reliability and coverage by means of space-time coding technique \cite{Tarokh1999} and/or to increase the spectral efficiency through spatial multiplexing \cite{Telatar1999} and \cite{Foschini98onlimits}. The main impairment of MIMO systems is represented by the
multi-stream interference (MSI) arising from the simultaneous
transmission of parallel data streams over the same frequency
band. However, if channel knowledge is available at the transmitter and receiver side appropriate linear or non-linear transceiver architectures can be used to mitigate MSI and fully exploit the potential benefits of MIMO technologies \cite{PaulrajBook}. The design of transceiver architectures for MSI mitigation in MIMO systems has received great attention in the last years and several excellent works can be found in the open literature. Good surveys of the results obtained in this area can be found in \cite{PalomarBook}\nocite{Shenouda2008} -- \cite{D'Amico2008} in which the authors develop general frameworks for the optimization of linear and non-linear MIMO architectures. This is achieved by resorting to the notions of additive and multiplicative majorization theory (see \cite{MarshallBook} for a complete reference on majorization). 



In the case of a long transmitter-receiver distance, single or multiple repeater (relay) nodes may be necessary to pass information from the transmitter (source) to the receveir (destination). The re-transmission schemes employed at the relays may operate according to several different protocols. Among them, the decode-and-forward protocol makes use of a regenerative relay to first decode the received signal and then re-encode and forward the original information to the destination node. On the other hand, the amplify-and-forward protocol adopts a non-regenerative relay in which the received signal is first linearly processed and then re-transmitted toward the destination. Clearly, amplify-and-forward is more suited to practical implementation than decode-and-forward since decoding and re-encoding multiple data streams involves higher computational complexity and larger processing latency than simply amplifying and forwarding them. Among the different amplify-and-forward MIMO relay systems, the simple two-hop 
single-relay MIMO model has gained a lot of interest over the last years as it provides a reasonable tradeoff between potential benefits and practical implementation issues. For the above reasons, a two-hop single-relay MIMO network is considered in this work.


The optimization of MIMO non-regenerative relay networks has received
much attention recently. A survey of the results obtained in this area can be found in \cite{SanguinettiJSAC2012} and briefly summarized in the following. One of the first attempt in this direction can be found in \cite{Tang2007} and \cite{Medina2007} in which the relay amplifying matrix is designed so as to maximize the link capacity between source and destination. It turns out that optimizing the relay matrix largely improves the system capacity with respect to alternative solutions based on heuristic arguments. Although the capacity is one of the most important information-theoretic measure, there are many other ways of characterizing the reliability of transmission schemes. For this purpose, several different solutions based on linear processing in the form of decorrelating or minimum mean-square-error schemes as well as on non-linear layered architectures have been recently derived and investigated according to different criteria. A unifying framework for the design of linear transceivers in the presence of a single relay is presented in \cite{Rong2009LinearRelayCommunication} and later extended to the multiple relay case in \cite{Rong2009LinearMultiHop}. The main results of \cite{Rong2009LinearRelayCommunication} and \cite{Rong2009LinearMultiHop} are achieved by application of additive majorization and consist in proving that the joint optimization of source, relays and destination matrices under fixed power constraints diagonalizes the MIMO relay channel as long as the objective function is Schur-concave or Schur-convex. In the latter case, the diagonalizing structure is optimal provided that the transmitted data symbols are properly rotated before channel diagonalization. The results in \cite{Rong2009LinearRelayCommunication} and \cite{Rong2009LinearMultiHop} are of great interest since many different optimization criteria driving the design of communication systems arise in connection with Schur-concave or Schur-convex functions \cite{Palomar03}. All the above results have been recently extended in \cite{Rong2009NonLinearRelayCommunication} to the case in which a decision feedback equalizer (DFE) is used at the destination node. This is achieved by means of the multiplicative majorization theory and the equal-diagonal QR decomposition tool illustrated in \cite{Zhang2005}. In particular, it is found that when the objective function is multiplicatively Schur-convex the optimal design leads to the uniform decomposition of the MIMO relay channel into an arbitrary number of identical parallel single-input single-output relay subchannels. This leads to a transmission scheme characterized by a much lower bit-error-rate (BER) than a system employing linear processing at the destination. On the other hand, if the objective function is multiplicatively Schur-concave the optimum non-linear transceiver architecture reduces to the linear one discussed in \cite{Rong2009LinearRelayCommunication} and \cite{Rong2009LinearMultiHop} for which the channel diagonalizing structure is optimal. 

All the aforementioned works are focused on the minimization or maximization of a global objective function subject to fixed power constraints at source and relay nodes. This may prevent their use in those multimedia applications supporting several
types of services each characterized by a different reliability constraint. A common approach to overcome this problem is to deal with the minimization of the power consumption while meeting the quality-of-service (QoS) requirements for each data stream (see for example \cite{Sampath01generalizedlinear} -- \nocite{PalomarQoS2004}\cite{Jiang2006} and references therein). A solution in this direction is proposed in \cite{Guan2008QoSConstraints} in which source and relay matrices are designed so as to ensure a specific signal-to-noise ratio (SNR) on each subchannel (see also \cite{Sayed2007QoSConstraints} for single-antenna relay networks). In \cite{Rong2011}, the author makes use of majorization theory and propose a unifying framework for the design of linear and non-linear transceiver architectures that minimize the total power consumption either in single-hop or multi-hop MIMO relay systems. 
As in \cite{Rong2009LinearRelayCommunication} \nocite{Rong2009LinearMultiHop} -- \cite{Rong2009NonLinearRelayCommunication}, it turns out that the optimal solution leads to the diagonalization of the source-relay-destination channel. However, the resulting power allocation problem is only upper- and lower-bounded using a successive geometric programming approach and a dual decomposition technique, respectively. Unfortunately, the computation complexity of both solutions is relatively high so as to make them unsuited for practical implementation. An alternative solution with reduced complexity is proposed in \cite{Mohammadi2010} where the authors rely on a convex relaxation of the QoS constraints. This results into two convex suboptimal problems that provide different upper and lower bounds to the optimal solution. Unfortunately, such bounds do not meet each other over the entire region of interest but only for high values of SNR. This means that they cannot be used to exactly characterize the solution of the original problem. Furthermore, it is worth observing that relaxing the QoS constraints may result into a suboptimal solution which does not necessarily belong to the feasible set of the original problem. This is the case of the lower bound illustrated in \cite{Mohammadi2010} in which a rescaling operation is required to meet the original QoS constraints, thereby leading to an increase of the power consumption. 


In this work, we return to the problem of designing optimal linear and non-linear transceiver architectures in a two-hop single-relay MIMO network with QoS requirements and we make use of the theoretical analysis presented in \cite{Rong2011} to focus only on the resulting non-convex power allocation problems. The latter are tackled following a different approach that provides a framework in which the power allocation for linear and non-linear schemes can be considered in a unified and concise way, and allows us to extend the results given in \cite{Rong2011}. In particular, the non-convex power allocation problem is first reduced to an equivalent form and then approximated with a convex one, whose solution is within the same feasible set of the original problem and can be computed in closed-form through a multi-step procedure that requires no more than $K$ steps (where $K$ is the number of data streams). Numerical results are used to highlight the effectiveness of the proposed approach. Interestingly, it turns out that for both linear and non-linear systems the approximated solutions are very close to those of the original problems. 



The remainder of this paper is organized as follows\footnote{The following notation is used throughout the paper. Boldface upper and lower-case letters denote
matrices and vectors, respectively, while lower-case letters denote
scalars. We use $\mathbf{A}=\mathrm{diag}\{ a_n\,;\,\,n = 1,2,
\ldots ,K\}$ to indicate a $K \times K$ diagonal matrix with
entries $a_n$ while ${\bf{A}}^{ - 1}$ and
${\bf{A}}^{ 1/2}$ denote the inverse and square-root of a
matrix ${\bf{A}}$. We use ${\bf{I}}_K$ to denote the identity matrix of order $K$ and ${\rm{rank}}({\bf{A}})$ to indicate the rank of a matrix ${\bf{A}}$ while $\left[
\cdot \right]_{k,\ell}$ is the ($k ,\ell$)th entry of the
enclosed matrix. In addition, we use $ {\rm{E}}\left\{ \cdot \right\}$ for
expectation, $\left\| \cdot \right\|$ for the Euclidean norm of
the enclosed vector and the superscript $ ^T$ and $ ^H$ for transposition and Hermitian transposition, respectively. The
notation $0 \le x \bot y \ge 0$ stands for $x y= 0$, $x \ge 0$ and $y \ge 0$. If the elements of ${\bf{x}}$ and ${\bf{y}}$ are arranged in increasing order, we use ${\bf{x}}\prec^{+(w)} {\bf{y}}$ to say that ${\bf{x}}$ is weakly additively majorized
by ${\bf{y}}$.
}. 
Next section describes the two-hop system model and introduces the power minimization problem. In Section III, a linear transceiver architecture is considered and the proposed approximated solution is described together with the convexity analysis of the original power allocation problem. In Section IV, the results are extended to a non-linear architecture in which a DFE is used at the receiver. Simulation results are discussed in Section V while some conclusions are drawn in Section VI.

\section{System description and problem formulation}

We consider a flat-fading\footnote{Although specific for a flat-fading channel, the model adopted throughout
the paper can easily be extended to frequency selective environments
using orthogonal frequency-division multiplexing (OFDM) as a transmission
technique.} MIMO network in
which the information data are carried from source to destination with the aid of a single non-regenerative relay. The information first flows from source to relay and then from relay to destination. The direct link between source and destination is not considered as it is assumed to undergo relatively large attenuation compared to the link via the relay. 

The $k$th symbol is denoted by $s_k$ and is taken from an $L-$ary quadrature amplitude modulation (QAM) constellation with average power normalized to unity for convenience. We denote by $K$ the total number of transmitted symbols and assume that source and destination are equipped with $N$ antennas while the relay has $M$ antennas\footnote{The results can be easily extended to a more general case in which different number of antennas are available at source and
destination.}. 

The source vector ${\bf{s}} = [ {{s}}_1,{{s}}_2, \ldots ,{{s}}_K]^T$ is first linearly processed by an $N \times K$ matrix $\mathbf{U}$ and then transmitted over the source-relay link in the first time-slot. At the relay,
the received signal is processed by an $M \times M$ matrix $\mathbf{F}$ and forwarded to the destination node in the second time-slot. The received signal at the destination takes the form \cite{Rong2009LinearRelayCommunication}
\begin{equation}\nonumber\label{2.1}
{\bf{r}}= \mathbf{H}{\bf{U}}{\bf{s}}+ {\bf{n}}
\end{equation}
where $\mathbf{H} ={\bf{H}}_2{\bf{F}}{\bf{H}}_1$ is the equivalent channel matrix while $\mathbf{H}_{1}$ and $\mathbf{ H}_{2}$ are the source-relay and relay-destination channel matrices, respectively. In addition, ${\bf{n}}$ is a zero-mean Gaussian vector with covariance matrix $\rho\mathbf{R}_\mathbf{n}$, where $\mathbf{R}_\mathbf{n} ={\bf{H}}_2{\bf{F}}{\bf{F}}^H{\bf{H}}_2^{H} + {\bf{I}}_M$ and $\rho>0$ accounting for the noise variance over both links\footnote{The extension to the case in which the noise contribution over each link has a different variance is straightforward.}. 
Henceforth, we denote by 
\begin{equation}\nonumber\label{2.2}
{\mathbf{H}}_1 = {\mathbf{\Omega}_{H_1}\mathbf{\Lambda}_{H_1}^{1/2}\mathbf{V}_{H_1}^H}
\end{equation}
and
\begin{equation}\nonumber\label{2.3}
{\mathbf{H}}_2 = {\mathbf{\Omega}_{H_2}\mathbf{\Lambda}_{H_2}^{1/2}\mathbf{V}_{H_2}^H} 
\end{equation}
the singular value decompositions (SVDs) of ${\mathbf{H}}_1$ and ${\mathbf{H}}_2$.
Without loss of generality, in all subsequent derivations we assume that the entries of the diagonal matrices $\mathbf{\Lambda}_{H_1}$ and $\mathbf{\Lambda}_{H_2}$ are arranged in non-increasing order. This amounts to saying that $\lambda_{H_1,n} \ge \lambda_{H_1,n+1}$ and $\lambda_{H_2,n} \ge \lambda_{H_2,n+1}$ for $n=1,2,\ldots,K-1$.

%
The vector $\bf{r}$ is finally processed by the destination node for data recovery. For this purpose, we consider either a linear receiver or a non-linear detector equipped with a DFE. In both cases, we deal with the joint design of the processing matrices so as to minimize the total power consumption given by 
\begin{equation}\nonumber\label{2.4}
P_T = {\rm{tr}}\left\{ {{\bf{U}} {\bf{U}}^H} \right\} + {\rm{tr}}\left\{ {{\bf{F}} \left({\bf{H}}_1{\bf{U}}{\bf{U}}^H{\bf{H}}_1{^H} + \rho{\bf{I}}_M\right){\bf{F}}^H} \right\}
\end{equation} 
while satisfying different QoS requirements in terms of the MSEs. As mentioned before, the above problem has recently been discussed by Y. Rong in \cite{Rong2011}. In the next, the main results of \cite{Rong2011} are first briefly revised and then the major contributions of this work are described. The linear case is considered first. 

\section{Linear Transceiver Design}
When a linear $K \times N$ receiver $\bf{G}$ is employed, the vector $\bf{y}$ at the input of the decision device takes the form
\begin{equation}
\nonumber
\label{ }
{\bf{y}}={\bf{G}}{\bf{H}}{\bf{U}}{\bf{s}} + {\bf{G}}{\bf{n}}.
\end{equation}
In these circumstances, the constrained power minimization problem can be mathematically formalized as follows \cite{Rong2011}:
\begin{equation}\label{2.5}
\mathop {\min }\limits_{{\bf{U}},{\bf{F}},{\bf{G}}} \;\; P_T \quad {\rm{s}}{\rm{.t}}{\rm{.}}\;\; \left[ {\bf{E}}\right]_{n,n}\le \eta_n\;\;{\rm{for}}\;\; n=1,2,\ldots,K
\end{equation}
where ${\bf{E}}$ denotes the MSE matrix ${\bf{E}} = {\rm{E}}\{
\left( {{\bf{y}} - {\bf{s}}} \right)\left( {{\bf{y}} - {\bf{s}}} \right)^H\}$ while {the quantities $0 < \eta_{n} \le 1$ are design parameters that are assumed to be in non-decreasing order, i.e., $\eta_{n} \le \eta_{n+1}$.}

As proven in \cite{Rong2011}, under the assumption that $K \le \min({\rm{rank}}({\bf{H}}_1),{\rm{rank}}({\bf{H}}_2))$ the optimal $\mathbf{G}^{(o)}$ in \eqref{2.5} is equal to the Wiener filter while the optimal $\mathbf{U}^{(o)}$ and $\mathbf{F}^{(o)}$ have the following form
\begin{equation}\label{3.1}
{\mathbf{U}}^{(o)} ={\tilde{\mathbf{V}}_{H_1}}{\mathbf{\Lambda}}_{U}^{1/2}\mathbf{Q}^H \quad {\text{and}} \quad {\mathbf{F}}^{(o)}= \tilde{\mathbf{V}}_{H_2} {\mathbf{\Lambda}}_F^{1/2}\tilde{\mathbf{\Omega}}_{H_1}^H
\end{equation}
where ${\tilde{\mathbf{V}}_{H_1}}$, $\tilde{\mathbf{V}}_{H_2}$ and $\tilde{\mathbf{\Omega}}_{H_1}$ correspond to the $K$ columns of ${{\mathbf{V}}_{H_1}}$, ${\mathbf{V}}_{H_2}$ and ${\mathbf{\Omega}}_{H_1}$ associated to
the $K$ largest singular values of the corresponding channel matrix while $\mathbf{Q}$ is a suitable $K \times K$ unitary matrix such that 
\begin{equation}\label{3.3}
\left[ {\bf{E}}\right]_{n,n} = \eta_n \;\;{\rm{for}}\;\; n = 1, 2, \ldots,K.
\end{equation}
The matrices $\mathbf{\Lambda}_{U}  = {\rm{diag}}\{\lambda^{(o)}_{U,1}, \lambda^{(o)}_{U,2}, \ldots, $ $\lambda^{(o)}_{U,K}\}$ and $\mathbf{\Lambda}_{F}  = {\rm{diag}}\{\lambda^{(o)}_{F,1}, \lambda^{(o)}_{F,2}, $ $\ldots,\lambda^{(o)}_{F,K}\}$ have diagonal structures with entries 
given by 
\begin{equation} \label{3.4}
 \lambda^{(o)}_{U,n}  = \frac{\rho}{\lambda_{H_1,n}} A^{(o)}_{n} \quad {\text{and}} \quad \lambda^{(o)}_{F,n} = \frac{1}{\lambda_{H_2,n}}  \frac{B^{(o)}_n}{A^{(o)}_n+1}
\end{equation}
where $A^{(o)}_{n}$ and $B^{(o)}_{n}$ are the solutions of the following problem:
\begin{equation}\label{3.6}
\mathop {\min }\limits_{\{A_{n} \, \ge \, 0\}, \{B_{n} \, \ge \, 0\}} \;\;\sum\limits_{n=1}^K \rho \left(\frac{A_{n}}{\lambda_{H_1,n}} + \frac{B_{n}}{\lambda_{H_2,n}}\right)\quad\quad\quad\quad\quad\quad\quad\quad\quad\quad\;
\end{equation}
\begin{equation}\nonumber
\begin{array}{l}
\quad\quad{\rm{s}}{\rm{.t}}{\rm{.}}\quad  \sum \nolimits_{n=1}^{j} {\lambda_{n}}\le\sum \nolimits_{n=1}^{j}\eta_n\;\; {\rm{for}} \;\; j=1,2,\ldots,K\quad\quad\quad\quad\quad
\\ 
\quad\quad \quad\quad\;\;{0 < \lambda_{n} \le 1} \quad n =1,2,\ldots,K \\ 
\;\;\quad\quad \quad\quad{\lambda_{n} \le \lambda_{n+1}}\quad n =1,2,\ldots,K-1.
 \end{array}
\end{equation}
Here, ${\lambda_{n}}$ denotes the $n$th eigenvalue of ${\bf{E}}$ and is given by \cite{Rong2011}
\begin{equation}\label{3.7}
\lambda_{n} = \frac{A_{n}+ B_{n} + 1}{A_{n}+ B_{n} + A_{n} B_{n} + 1}.
\end{equation} 
From (\ref{3.1}), it is seen that ${\mathbf{U}}^{(o)}$ and ${\mathbf{F}}^{(o)}$ are obtained as matched filters along the singular vectors of ${\mathbf{H}}_1$ and ${\mathbf{H}}_2$, respectively. 
In addition, the overall channel matrix given by ${\boldsymbol{\mathcal H}} = {\bf{G}}^{(o)}{\bf{H}}_{2}{\bf{F}}^{(o)}{\bf{H}}_{1} {\bf{U}}^{(o)}$ reduces to ${\boldsymbol{\mathcal H}} = \mathbf{Q}\mathbf{\Lambda}_{\mathcal H}\mathbf{Q}^H$ where $\mathbf{\Lambda}_{\mathcal H}$ is diagonal with entries \cite{Rong2011}
\begin{equation}\nonumber
\lambda_{{\mathcal H},n}= \frac{\lambda^{(o)}_{U,n}\lambda_{H_1,n}\lambda^{(o)}_{F,n}\lambda_{H_2,n}}{\lambda^{(o)}_{U,n}\lambda_{H_1,n}\lambda^{(o)}_{F,n}\lambda_{H_2,n} + \rho\left(\lambda^{(o)}_{F,n}\lambda_{H_2,n} + 1\right)}.
\end{equation}
As discussed in \cite{Rong2011}, the above result reveals that the optimal structure of the relay communication system is diagonal up to a unitary matrix $\bf{Q}$ satisfying \eqref{3.3}. If $K$ is a power of two, such a matrix can be chosen equal to the discrete Fourier transform matrix or to a Walsh-Hadamard matrix \cite{Palomar03}. Otherwise, it can be determined through the iterative procedure described in \cite{Viswanath1999}. 

The only problem left is to solve (\ref{3.6}) or, equivalently, to properly allocating the available power on the established links. This is a challenging task since the quantities $\lambda_{n}$ in \eqref{3.7} are not jointly convex in $A_{n}$ and $B_{n}$, thereby resulting into a combinatorial minimization problem for which no practical algorithm is available. {A way out to this problem consists in solving \eqref{3.7} alternately with respect to $A_{n}$ and $B_{n}$ keeping the other fixed. This leads to an iterative optimization procedure that 
if properly initialized monotonically converges to a local 
optimum of \eqref{3.7} since the conditional updates of 
$A_{n}$ and $B_{n}$ may either decrease or maintain (but not increase) the 
objective function. 
 Although conceptually simple, the above approach requires to iteratively solve multiple convex problems and does not guarantee the convergence to the optimum.} An alternative approach is discussed in \cite{Rong2011} in which the optimal solution is upper- and lower-bounded using a geometric programming approach and a dual decomposition technique, respectively. Unfortunately, the computation complexity of both solutions is relatively high so as to make them unsuited for practical implementation. To overcome the above problems, we follow an alternative approach in which the optimization is first carried out over $A_{n}$ and $B_{n}$ for a fixed $\lambda_{n}$ and then over all possible $\lambda_{n}$ within the feasible set of (\ref{3.6}). 
In Appendix A, using standard calculus techniques it is shown that the quantities $A_{n}$ and $B_{n}$ that minimize the function $\rho({A_{n}}/{\lambda_{H_1,n}} + {B_{n}}/{\lambda_{H_2,n}})$ in \eqref{3.6} for a fixed $\lambda_{n}$ have the form
\begin{equation}\label{3.8}
A_{n}=\frac{1-\lambda_{n}}{\lambda_{n}} + \sqrt{\frac{\lambda_{H_1,n}} {\lambda_{H_2,n}}} \frac{\sqrt{1-\lambda_{n}}}{\lambda_{n}}
\end{equation} 
and
\begin{equation}\label{3.9}
B_{n}=\frac{1-\lambda_{n}}{\lambda_{n}} +  \sqrt{\frac{\lambda_{H_2,n}} {\lambda_{H_1,n}}} \frac{\sqrt{1-\lambda_{n}}}{\lambda_{n}}.
\end{equation}
Using \eqref{3.8} and \eqref{3.9} into $\rho({A_{n}}/{\lambda_{H_1,n}} + {B_{n}}/{\lambda_{H_2,n}})$ yields
\begin{equation}\label{3.10}
P_n\left(\lambda_{n}\right) = \frac{\rho}{\sqrt{\lambda_{H_1,n}\lambda_{H_2,n}}}\left(\gamma_{n} \frac{1-\lambda_{n}}{\lambda_{n}} +  2 \frac{\sqrt{1-\lambda_{n}}}{\lambda_{n}}\right) 
\end{equation} 
where $\gamma_n$ is given by
\begin{equation}\label{3.11}
\gamma_n = \frac{\lambda_{H_1,n} + \lambda_{H_2,n}}{\sqrt{\lambda_{H_1,n}\lambda_{H_2,n}}}.
\end{equation} 
It is worth observing that $\gamma_n$ is a positive function of $\lambda_{H_1,n}$ and $\lambda_{H_2,n}$ taking values in the interval $[2,\infty)$. The minimum $\gamma_{n} = 2$ is achieved for $\lambda_{H_1,n} = \lambda_{H_2,n}$.

The optimization over all possible $\lambda_{n}$ satisfying the constraints in \eqref{3.6} leads to the following equivalent problem:
\begin{equation}\label{3.12}
\mathop {\min }\limits_{\boldsymbol{\lambda} \, \in \, \mathcal L} \;\;P({\boldsymbol{\lambda}}) = \sum\limits_{n=1}^K P_n\left(\lambda_{n}\right) \end{equation}
where $\boldsymbol{\lambda} = [\lambda_1,\lambda_2,\ldots,\lambda_K]^T$ and $\mathcal L$ is the set of admissible $\boldsymbol{\lambda}$ defined as
\begin{equation}\nonumber
\mathcal L = \left\{\boldsymbol{\lambda} \in \RR^K: \boldsymbol{\eta} \prec^{+(w)} \boldsymbol{\lambda} , 0 < \lambda_{n} \le 1 \, \text{and}\, {\lambda_{n} \le \lambda_{n+1}}\right\}
\end{equation}
with $\boldsymbol{\eta} = [\eta_1,\eta_2,\ldots,\eta_K]^T$ being the vector collecting the QoS requirements.
%
%
%
%

To proceed further, we denote by $\boldsymbol{\lambda}^{(o)}=\{\lambda^{(o)}_{1},\lambda^{(o)}_{2},\ldots,\lambda^{(o)}_{K}\}$ the solution of \eqref{3.12} and call $P(\boldsymbol{\lambda}^{(o)})$ the corresponding power consumption. Once computed, the $n$th element of $\boldsymbol{\lambda}^{(o)}$ is used in \eqref{3.8} and \eqref{3.9} for the computation of $A^{(o)}_{n}$ and $B^{(o)}_{n}$, which are then employed in \eqref{3.4} for determining $\lambda^{(o)}_{U,n}$ and $ \lambda^{(o)}_{F,n}$. 

Unfortunately, finding $\boldsymbol{\lambda}^{(o)}$ is a challenging task since the optimization problem in \eqref{3.12} is not in a convex form \cite{BoydBook}. While the feasible set $\mathcal L$ is convex, 
the function $P_n(\lambda_{n}) $ is convex for $0 < \lambda_{n} \le \alpha_{n}$ and concave for $\alpha_{n} < \lambda_{n} \le 1$ where, as shown in Appendix B, $\alpha_{n}$ is given by (see Fig. 1 for a graphical illustration)
\begin{equation}
\label{3.13}
\alpha_{n}=\frac{1}{\dfrac{3}{4}-\dfrac{\sqrt{\xi_n}}{2}+\dfrac{1}{2}\sqrt{\dfrac{3}{4}-\xi_n+\dfrac{1}{\sqrt{\xi_n}} \dfrac{\gamma_n^{2}}{4(\gamma_n^{2}-4)}}}
\end{equation}
with $\xi_n$ defined as follows
\begin{equation}
\label{3.14}
\xi_n=\frac{1}{4}-\dfrac{1}{\sqrt[3]{16(\gamma_n^{2}-4)}}+\dfrac{1}{\sqrt[3]{4(\gamma_n^{2}-4)^{2}}}.
\end{equation}
The two above expressions hold for $\gamma_n \ne 2$. For $\gamma_n = 2$, $\alpha_{n}$ takes its minimum value given by (see Appendix B) 
\begin{equation}\nonumber
\alpha_{{\rm{min}}} = \frac{8}{9}.
\end{equation}
%
%
This result has an interesting theoretical relevance as it allows to prove the following lemma.

\begin{lemma}
If
\begin{equation}\label{3.16}
\sum \limits_{n=1}^{K} {\eta_{n}}\le \frac{8}{9} 
\end{equation}
then the optimization problem in \eqref{3.12} is convex.
\end{lemma}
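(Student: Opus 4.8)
The plan is to exploit the separable structure of the objective together with the already-established convexity of the feasible set $\mathcal{L}$. Since $P(\boldsymbol{\lambda}) = \sum_{n=1}^{K} P_n(\lambda_{n})$ with each $P_n$ depending on a single coordinate, the Hessian of $P$ is diagonal with entries $P_n''(\lambda_{n})$, so $P$ is convex on any box $\prod_{n=1}^{K}(0,\alpha_{n}]$ on which each $P_n$ is restricted to its convex branch. Hence it suffices to show that the hypothesis \eqref{3.16} forces every feasible point to lie in such a box, i.e.\ that $\lambda_{n} \le \alpha_{n}$ for all $n$ and all $\boldsymbol{\lambda}\in\mathcal{L}$. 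Convexity of $P$ over $\mathcal{L}$ then follows because $\mathcal{L}$ is a convex subset of that box, and combining this with the convexity of $\mathcal{L}$ renders \eqref{3.12} a convex program.

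The core of the argument is a short chain of inequalities bounding the largest coordinate. By definition of $\mathcal{L}$ the entries are ordered non-decreasingly and are strictly positive, so $\lambda_{n} \le \lambda_{K} \le \sum_{m=1}^{K}\lambda_{m}$. The majorization constraint defining $\mathcal{L}$, which by \eqref{3.6} includes $\sum_{m=1}^{j}\lambda_{m}\le \sum_{m=1}^{j}\eta_{m}$, evaluated at $j=K$ gives $\sum_{m=1}^{K}\lambda_{m}\le \sum_{m=1}^{K}\eta_{m}$, and the hypothesis \eqref{3.16} caps the right-hand side at $8/9$. Chaining these yields $\lambda_{n}\le 8/9$ for every $n$ and every $\boldsymbol{\lambda}\in\mathcal{L}$.

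It remains to relate this bound to the thresholds $\alpha_{n}$. Here I would invoke the facts recorded just above the lemma, namely that $\gamma_{n}\ge 2$ with equality iff $\lambda_{H_1,n}=\lambda_{H_2,n}$, and that $\alpha_{n}$ attains its minimum value $\alpha_{\rm{min}}=8/9$ precisely at $\gamma_{n}=2$ through \eqref{3.13}--\eqref{3.14}. Consequently $\alpha_{n}\ge 8/9$ for all $n$, so the bound of the previous paragraph gives $\lambda_{n}\le 8/9\le \alpha_{n}$. Thus $\mathcal{L}\subseteq \prod_{n=1}^{K}(0,\alpha_{n}]$, each argument $\lambda_{n}$ stays inside the interval where $P_n$ is convex, and $P$ is therefore convex on $\mathcal{L}$.

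I expect the only delicate point to be the step $\alpha_{n}\ge \alpha_{\rm{min}}$, which rests on the monotonic dependence of $\alpha_{n}$ on $\gamma_{n}$ established in Appendix~B; rather than re-derive it I would simply cite that minimality. Everything else is elementary, and the essential insight is that positivity and ordering of the $\lambda_{n}$ collapse the outermost majorization inequality into a single uniform bound $\lambda_{n}\le 8/9$ on the largest, and hence on every, coordinate, which coincides exactly with the universal lower bound $8/9$ on the convexity thresholds.
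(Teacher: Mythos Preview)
Your proof is correct and follows essentially the same route as the paper's: use the $j=K$ majorization constraint together with positivity of the $\lambda_n$ to deduce $\lambda_n\le 8/9$ for every feasible $\boldsymbol{\lambda}$, then invoke $\alpha_{\min}=8/9$ so that each $P_n$ stays on its convex branch, whence $P$ is a sum of convex functions on $\mathcal{L}$. The only cosmetic difference is that you route the bound through the ordering $\lambda_n\le\lambda_K\le\sum_m\lambda_m$, whereas the paper simply notes that positivity of all summands already forces each term below the total.
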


\vspace{0.1cm}
\begin{proof}
If \eqref{3.16} holds true, then $\sum \nolimits_{n=1}^{K} {\lambda_{n}}\le 8/9$ for all $\boldsymbol{\lambda} \in \mathcal L$
from which (bearing in mind that $\lambda_{n} > 0$) it follows that any admissible $\lambda_{n}$ must be smaller than or equal to $8/9$. Using this fact and recalling that $P_n(\lambda_{n})$ is convex in $(0, 8/9]$, we have that the objective function $P(\boldsymbol{\lambda})$ is convex $\forall \boldsymbol{\lambda} \in \mathcal L$ as it is the sum of convex functions.
\end{proof}

\vspace{0.1cm}
Lemma 1 establishes a sufficient condition for the optimization problem \eqref{3.12} to be convex. Clearly, such a condition is not always met as it depends on the number of data streams and on the specific QoS requirements. This means that solving \eqref{3.12} is in general hard and prompts us to search for alternative methods. As a major contribution of this work, in the next section \eqref{3.12} is approximated with a convex problem whose solution is close to the optimal one and can be
evaluated in closed-form through an exact procedure requiring a maximum number of $K-1$
steps. 

\begin{figure}[t]
\begin{center}
\includegraphics[width=.45\textwidth]{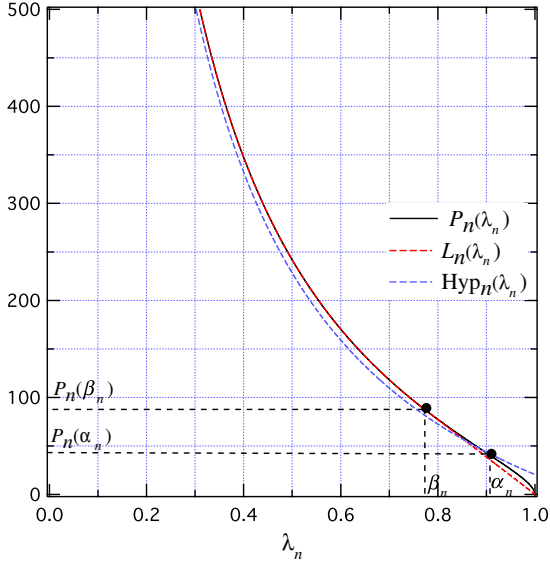}
\end{center}
\caption{Graphical illustration of $P_n(\lambda_n)$, $\text{Hyp}_n\left(\lambda_{n}\right)$ and $L_n(\lambda_n)$ for $0 < \lambda_n \le 1$.}
\label{Fig1}
\end{figure}

\subsection{Hyperbola-based approximation}
We start observing that $P_n(\lambda_{n})$ in \eqref{3.10} has approximately a $1/\lambda_{n}-$shape over the interval $(0,\alpha_{n}]$ in which it is convex. This holds true especially when $\lambda_{n}$ approaches zero. Then, the idea is to approximate $P_n\left(\lambda_{n}\right)$ with the following hyperbola (see Fig. 1):
\begin{equation}\nonumber\label{3.40}
\text{Hyp}_n\left(\lambda_{n}\right) = \frac{w_n}{\lambda_{n}} + z_n
\end{equation} 
where $w_n$ and $z_n$ can be obtained as follows
 \begin{equation}\label{3.41}
(w_n,z_n)=\arg\mathop {\min }\limits_{\tilde w, \tilde z} \mathop {\sup }\limits_{\lambda_{n} \in (0,1]}  \left|P_n\left(\lambda_{n}\right) - \frac{\tilde w}{\lambda_{n}} -  \tilde z\right|.  
\end{equation} 
It can be shown that $w_n$ and $z_n$ are given by\footnote{The proof has been omitted for space limitations. It will be provided upon request. An intuitive explanation at least for $w_n$ relies on the observation that $P_n(\lambda_n)$ in \eqref{3.10} goes to infinity as ${\rho}\left(\gamma_n + 2\right)/(\lambda_n {\sqrt{\lambda_{H_1,n}\lambda_{H_2,n}}})$ for $\lambda_n \rightarrow 0$. This means that when $\lambda_n$ approaches zero the supremum in \eqref{3.41} is bounded only if the coefficients $w_n$ are in the form given by \eqref{3.42}.}
 \begin{equation}\label{3.42}
w_n = \frac{\rho}{\sqrt{\lambda_{H_1,n}\lambda_{H_2,n}}}\left(\gamma_n + 2\right) \end{equation} 
and
 \begin{equation}\label{3.43}
z_n = -\frac{\rho}{\sqrt{\lambda_{H_1,n}\lambda_{H_2,n}}}
\frac{2\gamma_n + 3}{2}.
\end{equation} 
%
Replacing $P_n(\lambda_{n})$ with $\text{Hyp}_n\left(\lambda_{n}\right)$ in \eqref{3.12} leads to the following convex optimization problem:
\begin{equation}\label{3.44}
\mathop {\min }\limits_{\boldsymbol{\lambda} \, \in \, \mathcal L} \;\; \sum\limits_{n=1}^K \frac{w_n}{\lambda_{n}}
\end{equation}
where we have omitted the irrelevant terms $\{z_n\}$. 

The above problem is clearly in a convex form as the objective function and the feasible set are both convex. To solve it, we first observe that the the ordering constraint $\lambda_{n} \le \lambda_{n+1}$ in $\mathcal L$ is always satisfied by the solution of \eqref{3.44}. To see how this comes about, recall that we have assumed $\lambda_{H_1,n} \ge \lambda_{H_1,n+1}$ and $\lambda_{H_2,n} \ge \lambda_{H_2,n+1}$. Then, from (\ref{3.42}) it follows that $w_n \le w_{n+1}$. Now, denote by $\boldsymbol{\lambda}^{\star}$ the solution of \eqref{3.44} and assume that it is such that $\lambda^{\star}_{{n}} \ge \lambda^{\star}_{{n}+1}$ for some ${n} \in \{1,2,\ldots,K-1\}$. Then, the vector $\boldsymbol{\lambda}^{\prime}$ obtained from $\boldsymbol{\lambda}^{\star}$ by simply exchanging $\lambda^{\star}_{{n}}$ and $\lambda^{\star}_{{n}+1}$ would still satisfy the convex constraints of \eqref{3.44} but it would result in a lower objective function since $w_{{n}} \le w_{{n}+1}$. This means that $\boldsymbol{\lambda}^{\star}$ must be such that $\lambda^{\star}_{n} \le \lambda^{\star}_{n+1}$, for $n=1,2,\ldots, K-1$. 

{The numerical evaluation of the solution of \eqref{3.44} cannot be
performed by means of classical bisection or hypothesis testing methods but it requires the development of specific algorithms exploiting the particular structure of the
problem. A good sample in this direction is represented by the multi-level waterfilling algorithm illustrated in \cite{PalomarQoS2004} (see also \cite{Gao2008}) which provides the solution of \eqref{3.44}
in no more than $K(K+1)/2$ iterations. In this work, we
propose an alternative algorithm that allows the computation of $\boldsymbol{\lambda}^{\star}$ in closed-form with a maximum number of $K$ steps.}
\begin{proposition} 
The solution to \eqref{3.44} can be computed through the iterative procedure illustrated in {\bf{Algorithm 1}}.
\end{proposition}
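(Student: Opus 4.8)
The plan is to characterize the unique minimizer of the convex program \eqref{3.44} through its Karush--Kuhn--Tucker (KKT) conditions and then to verify that the vector returned by \textbf{Algorithm 1} satisfies them. First I would observe that the objective $\sum_{n=1}^K w_n/\lambda_n$ is strictly convex on $\{\boldsymbol{\lambda}:\lambda_n>0\}$, since its Hessian is diagonal with positive entries $2w_n/\lambda_n^3$, that the feasible set $\mathcal L$ is a nonempty bounded convex set (it contains $\boldsymbol{\eta}$), and that the objective diverges as any $\lambda_n\to 0^+$. A unique minimizer $\boldsymbol{\lambda}^{\star}$ therefore exists with all $\lambda_n^{\star}>0$, and because the remaining constraints are affine the KKT conditions are necessary and sufficient for optimality. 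Introducing the cumulative budgets $S_j=\sum_{n=1}^{j}\eta_n$ and attaching multipliers $\mu_j\ge 0$ to the constraints $\sum_{n=1}^{j}\lambda_n\le S_j$ and $\nu_n\ge 0$ to the box constraints $\lambda_n\le 1$, stationarity reads
\begin{equation}\nonumber
\frac{w_n}{\lambda_n^{2}}=\sum_{j=n}^{K}\mu_j+\nu_n,\qquad n=1,2,\ldots,K .
\end{equation}

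Next I would extract the block structure implied by these conditions. With $\pi_n=\sum_{j\ge n}\mu_j$ the sequence $\{\pi_n\}$ is nonincreasing and, by complementary slackness, can drop only at indices $j$ where the cumulative constraint is active ($\mu_j>0\Rightarrow\sum_{n=1}^{j}\lambda_n=S_j$). Disregarding for a moment the box constraints, these active indices partition $\{1,\ldots,K\}$ into consecutive blocks; on a block $\mathcal B=\{p+1,\ldots,q\}$ the quantity $\pi_n$ equals a constant $\pi_{\mathcal B}$, so $\lambda_n=\sqrt{w_n/\pi_{\mathcal B}}$, while the two bracketing active constraints force $\sum_{n\in\mathcal B}\lambda_n=S_q-S_p$. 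Eliminating $\pi_{\mathcal B}$ gives the closed-form allocation
\begin{equation}\nonumber
\lambda_n=\sqrt{w_n}\,\frac{S_q-S_p}{\sum_{m=p+1}^{q}\sqrt{w_m}},\qquad n\in\mathcal B ,
\end{equation}
which is the per-block rule computed inside \textbf{Algorithm 1}. Since $w_n$ is nondecreasing (by \eqref{3.42} and the ordering of the channel gains) and $\pi_{\mathcal B}$ decreases from one block to the next, the resulting $\lambda_n$ are automatically nondecreasing, confirming that the ordering constraint $\lambda_n\le\lambda_{n+1}$ is inactive and may be dropped, as argued before the proposition.

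Then I would show that the greedy splitting implemented by \textbf{Algorithm 1} reconstructs exactly this partition. Starting from the single block $\{1,\ldots,K\}$ endowed with the total budget $S_K$, the algorithm evaluates the candidate above and checks the intermediate constraints $\sum_{n=1}^{j}\lambda_n\le S_j$; when one of them is violated it splits the block at the offending index and re-optimizes each sub-block with its own budget. Using an exchange argument I would prove that every index at which the algorithm splits must correspond to an active cumulative constraint of $\boldsymbol{\lambda}^{\star}$, and that the induced $\pi_{\mathcal B}$ are indeed nonincreasing across successive blocks, so that the constructed point together with the implied $\{\mu_j,\nu_n\}$ satisfies every KKT relation. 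As each step creates one new breakpoint and there are at most $K-1$ of them, the procedure halts after at most $K$ steps, and by uniqueness of the KKT point its output equals $\boldsymbol{\lambda}^{\star}$.

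The hard part will be the rigorous treatment of the box constraints $\lambda_n\le 1$ jointly with the cumulative constraints. On a block whose unconstrained formula would return some $\lambda_n>1$ the multipliers $\nu_n$ become active and the corresponding components must be clipped to $1$, which reallocates budget to the unsaturated indices and may move the block boundaries. I would handle this by peeling off the saturated (large-index) tail, re-solving on the unsaturated indices with the reduced budget $S_q$ minus the number of clipped entries, and checking that complementary slackness ($\nu_n>0$ only where $\lambda_n=1$) stays consistent; the delicate point is to show that this clipping interacts monotonically with the greedy partition, so that both the $K$-step bound and the nondecreasing ordering are preserved.
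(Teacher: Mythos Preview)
Your KKT block-structure characterization is sound and is indeed what underlies the result, but there is a genuine gap: you have misread what \textbf{Algorithm~1} actually does. It is \emph{not} a greedy-splitting procedure that starts from the single block $\{1,\ldots,K\}$, checks the intermediate cumulative constraints, and splits at a violated index. Rather, it computes the components one at a time from the top: at iteration $i$ it evaluates the closed form $\hat\lambda_{K-i}=\sqrt{w_{K-i}}\,\max_{\ell}\bigl(\delta_{K-i}-\delta_\ell\bigr)/\sum_{j=\ell+1}^{K-i}\sqrt{w_j}$, clips via $\min(1,\cdot)$, subtracts the result from the current top budget, and recurses on the problem of size $K-i-1$. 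Your sketch would prove correctness of a different (PAV-style) algorithm; to prove \textbf{Algorithm~1} you must show that this particular max-then-clip delivers exactly $\lambda_K^\star$, and that after removing index $K$ the residual problem has the same structure.

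The paper's route to the max formula is quite different from your partition-reconstruction idea. It introduces a family of relaxations $\mathcal P^{(i)}$ of the unclipped problem (keeping only the cumulative constraints $1,\ldots,i$ and $K$), writes the KKT solution of each, and proves the recursion $\hat\lambda_K^{(i+1)}=\max\bigl(\hat\lambda_K^{(i)},\,\sqrt{w_K}\,(\delta_K-\delta_{i+1})/\sum_{n=i+2}^K\sqrt{w_n}\bigr)$ by separating the cases $\varsigma_{i+1}^{(i+1)}>0$ and $\varsigma_{i+1}^{(i+1)}=0$; unrolling from $\mathcal P^{(0)}$ to $\mathcal P^{(K-1)}$ yields the max formula directly. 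A short lemma then shows $\lambda_K^\star=\min(1,\hat\lambda_K)$, and one checks $\delta_K-\lambda_K^\star\le\delta_{K-1}$ so the reduced problem has exactly the original form with $K-1$ in place of $K$. Your block analysis could in principle be turned into an alternative proof, but the missing link is the statement that the maximizing $\ell$ in the formula is precisely the left boundary of the block containing $K$ (equivalently, that among all candidate blocks $\{\ell+1,\ldots,K\}$ the true one yields the largest value of $\lambda_K$); establishing this requires the monotonicity of your $\pi_{\mathcal B}$ together with the feasibility of the intermediate constraints inside the block, and it does not follow from the splitting picture you describe.
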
 

\vspace{0.1cm}
\begin{proof}
See Appendix C.
\end{proof}

%
%
%
%
%
%
%

\begin{algorithm}[t]
\caption{Multi-step procedure for solving \eqref{3.44}.}

1) Set $i=0$, $\delta_0 = 0$ and compute 
\begin{equation}\nonumber\label{3.70}
\delta_j = \sum\limits_{n=1}^{j}\eta_n
\end{equation}
for $j = 1,2,\ldots,K$.

2) {{\bf{While}}} $i < K$: Compute
\begin{equation}\nonumber\label{3.70.1000}
\lambda_{K-i}^\star = \min(1,\hat \lambda_{K-i})
\end{equation}
where
\begin{equation}\nonumber\label{3.70}
\hat \lambda_{K-i}= \left(\mathop {\max }\limits_{\ell=0,1,\ldots,K-i-1}\;\;\frac{\delta_{K-i} - \delta_\ell}{\sum\nolimits_{j=\ell+1}^{K-i}\sqrt{w_j}} \right) \sqrt{w_{K-i}}.
\end{equation}
Set 
\begin{equation}\nonumber
\delta_j = \left\{ {\begin{array}{*{20}c}
   \delta_j &  \;\;\;{\rm{for}}\;j=0,1,\ldots,K-i-2  \\
\delta_{K-i} - \lambda_{K-i}^\star &  \!\!\!\!\!\!\!\!\!\!\!\!\!\!\!\!{\rm{for}}\;j=K-i-1  \\
\end{array}} \right.
\end{equation}
and $i=i+1$.
%
\end{algorithm}
{As seen from {\bf{Algorithm 1}}, differently from \cite{PalomarQoS2004} a closed-form solution depending on the system parameters is provided for each step of the iterative procedure. It is worth observing that a similar algorithm has been recently proposed in \cite{Fu2011} for the optimization of linear two-hop MIMO networks with multiple relays. Although similar, the proposed one has been derived following a different line of reasoning which is not specific for the optimization problem at hand but it can be used to easily accommodate
other optimization problems with similar structures. 
As we will see in the next section, it can be applied to a MIMO relay network in which a DFE is employed at the destination to compute an approximation of the optimal power allocation.}


%

Once $\boldsymbol{\lambda}^{\star}$ is computed through {\bf{Algorithm 1}}, it is then used for approximating $\lambda^{(o)}_{U,n}$ and $\lambda^{(o)}_{F,n}$ in \eqref{3.4} using the same procedure illustrated before for $\boldsymbol{\lambda}^{(o)}$. More precisely, $\lambda^{\star}_{n}$ is first employed in \eqref{3.8} and \eqref{3.9} to obtain $A^{\star}_{n}$ and $B^{\star}_{n}$, which are then used to replace $A^{(o)}_{n}$ and $B^{(o)}_{n}$ in \eqref{3.4}. This yields 
\begin{equation} \nonumber\label{3.35}
\lambda^{\star}_{U,n}  =
   \frac{\rho}{\lambda_{H_1,n}}\left(\frac{1-\lambda^{\star}_{n}}{\lambda^{\star}_{n}} + \sqrt{\frac{\lambda_{H_1,n}} {\lambda_{H_2,n}}} \frac{\sqrt{1-\lambda^{\star}_{n}}}{\lambda^{\star}_{n}}\right)
\end{equation}
and
\begin{equation} \nonumber\label{3.36}
 \lambda^{\star}_{F,n} =
   \frac{1}{\lambda_{H_2,n}}\frac{\sqrt{\frac{\lambda_{H_1,n}} {\lambda_{H_2,n}}}({1-\lambda^{\star}_{n}}) + \sqrt{1-\lambda^{\star}_{n}}}{\sqrt{\frac{\lambda_{H_2,n}} {\lambda_{H_1,n}}} + \sqrt{1-\lambda^{\star}_{n}}}
\end{equation}
while the required power results equal to $P(\boldsymbol{\lambda}^{\star})$.

It is worth observing that replacing $\boldsymbol{\lambda}^{(o)}$ with $\boldsymbol{\lambda}^{\star}$ inevitably increases the power consumption. Then, we have that 
\begin{equation}\nonumber\label{3.24}
P(\boldsymbol{\lambda}^{\star}) \ge P(\boldsymbol{\lambda}^{(o)})
\end{equation}
with $P(\boldsymbol{\lambda}^{(o)})$ being the global-minimum power allocation for satisfying the QoS requirements. The numerical evaluation of the performance loss $P(\boldsymbol{\lambda}^{\star}) - P(\boldsymbol{\lambda}^{(o)})$ would require knowledge of $P(\boldsymbol{\lambda}^{(o)})$, which can be computed only solving \eqref{3.12}. A possible way out to this problem is to make use of a lower bound of $P(\boldsymbol{\lambda}^{(o)})$, say $\underline P$. More precisely, observing that $\underline P \le P(\boldsymbol{\lambda}^{(o)})$ and $P(\boldsymbol{\lambda}^{(o)}) \le P(\boldsymbol{\lambda}^{\star})$ yields
\begin{equation}\nonumber\label{3.24}
P(\boldsymbol{\lambda}^{\star}) - P(\boldsymbol{\lambda}^{(o)}) \le P(\boldsymbol{\lambda}^{\star}) - \underline P.
\end{equation}
The above result indicates that the performance loss is upper-bounded by the difference $P(\boldsymbol{\lambda}^{\star}) - \underline P$. Clearly, if such a difference is relatively small, the power penalty incurred in approximating ${\boldsymbol{\lambda}}^{(o)}$ with $\boldsymbol{\lambda}^{\star}$ can be neglected and $\boldsymbol{\lambda}^{\star}$ well approximates the solution of the original problem in \eqref{3.12}. Motivated by the above fact, we now proceed with the computation of a lower bound for $P(\boldsymbol{\lambda}^{(o)})$. To this end, we start observing that for any $\lambda_{n} \in (0,1]$ ${P}_n\left(\lambda_{n}\right)$ is not smaller than the following convex function:

\begin{equation}\nonumber\label{3.24}
{L}_n\left(\lambda_{n}\right) = \left\{ {\begin{array}{*{20}c}
   P_n\left(\lambda_{n}\right), & 0 <  \lambda_{n} \le \beta_{n} \\ 
    P_n(\beta_{n}) + P_n'(\beta_{n}) \left(\lambda_{n} - \beta_{n} \right), & \beta_{n} < \lambda_{n} \le 1  \\
\end{array}} \right.
\end{equation}
where 
\begin{equation}\label{3.22}
P_n^{'}\left(\beta_{n}\right) = - \frac{\rho}{\beta_{n}^2\sqrt{\lambda_{H_1,n}\lambda_{H_2,n}}} \cdot\left( \gamma_n + \frac{2-\beta_{n}}{\sqrt{1-\beta_{n}}} \right)
\end{equation}
denotes the first derivative of $P_n(\lambda_{n})$ evaluated at $\lambda_{n} = \beta_{n}$. The latter represents the abscissa of the intersection point between $P_n\left(\lambda_{n}\right)$ and the line $P_n'\left(\beta_n\right)(\lambda_n -1)$ passing through $(1,0)$ and tangent to $P_n(\lambda_{n})$ (see Fig. 1). This amounts to saying that  
\begin{equation}\nonumber
P_n( \beta_n) + P'_n(\beta_n)\left(\lambda_n - \beta_n\right) = P_n'(\beta_n)\left(\lambda_n - 1\right)
\end{equation}
or, equivalently, that $P_n'(\beta_n)(\beta_n -1) = P_n(\beta_n)$ from which using \eqref{3.10} we obtain
\begin{equation}\label{3.22.10}
\left(P_n'(\beta_n)\beta_n + \frac{\rho}{\sqrt{\lambda_{H_1,n}\lambda_{H_2,n}}} \gamma_n \right) \sqrt{1-\beta_n}+2=0.
\end{equation}
Substituting \eqref{3.22} into \eqref{3.22.10} yields $3\beta_n-2=\gamma \sqrt{(1-\beta_n)^3}$. 
%
%
It turns out that the solution of this equation in the interval $(0,1]$ is unique and given by
\begin{equation}\label{3.26}
\beta_{n} = 1- \frac{\epsilon_n}{(\epsilon_n+1)^2}
\end{equation}
with 
\begin{equation}\label{3.27}
\epsilon_n = 
\sqrt[3]{{\frac{{\gamma_n ^2  - 2 + \gamma_n \sqrt {{\gamma_n ^2  - 4} } }}{2}}}.
\end{equation}
Now, consider the following optimization problem:
\begin{equation}\label{3.28}
\mathop {\min }\limits_{\boldsymbol{\lambda}\, \in \, \mathcal L} \;\; L({\boldsymbol{\lambda}}) = \sum\limits_{n=1}^K  L_n\left(\lambda_{n}\right)
\end{equation}
obtained by replacing $P_n(\lambda_{n})$ with $L_n(\lambda_{n})$ in \eqref{3.12}. The above problem is in a convex form since both the objective function and the feasible set are convex. Moreover, denoting by $\underline {\boldsymbol{\lambda}}=\{ \underline {\lambda}_{1},\underline {\lambda}_{2},\ldots,\underline {\lambda}_{K}\}$ its solution we may write
\begin{equation}\nonumber
\label{ineqs}
L(\underline {\boldsymbol{\lambda}}) \le L(\boldsymbol{\lambda}^{(o)})
\end{equation}
where we have used the fact that $\boldsymbol{\lambda}^{(o)}$ belong to the same feasible set of \eqref{3.28}. In addition, as a consequence of the following inequality $L({\boldsymbol{\lambda}}) = \sum\nolimits_{n=1}^K  L_n\left(\lambda_{n}\right) \le \sum\nolimits_{n=1}^K {P}_n\left(\lambda_{n}\right) =P({\boldsymbol{\lambda}})$ we have that
\begin{equation}\nonumber
\label{ineqs}
L(\boldsymbol{\lambda}^{(o)}) \le P(\boldsymbol{\lambda}^{(o)}).
\end{equation}
Collecting the above facts togheter yields 
\begin{equation}\nonumber
\label{ineqs}
L(\underline {\boldsymbol{\lambda}}) \le L(\boldsymbol{\lambda}^{(o)}) \le P(\boldsymbol{\lambda}^{(o)})
\end{equation}
from which it is seen that a lower bound for $P(\boldsymbol{\lambda}^{(o)})$ is given by $\underline P=L(\underline {\boldsymbol{\lambda}})$. Numerical results shown later demonstrate that the difference $P(\boldsymbol{\lambda}^{\star}) - \underline P$ is negligible. As discussed before, this makes $\boldsymbol{\lambda}^{\star}$ a very good approximation of ${\boldsymbol{\lambda}}^{(o)}$. 

Following a simple line of reasoning, it can easily be shown that $\underline P$ represents the tightest lower bound of $P(\boldsymbol{\lambda}^{(o)})$ that can be obtained by replacing $P({\boldsymbol{\lambda}})$ in \eqref{3.12} with a convex lower approximation. To see how this comes about, take a look at Fig. 1 and observe that $L_n(\lambda_n)$ represents the convex function not greater than $P_n(\lambda_n)$ for which the difference $P_n(\lambda_n) - L_n(\lambda_n)$ is minimum for any $\lambda_n \in (0,1]$. This makes $L_n(\lambda_n)$ the best convex lower approximation to $P_n(\lambda_n)$ in the interval $(0,1]$. Accordingly, $L({\boldsymbol{\lambda}})$ is the best convex lower approximation to $P({\boldsymbol{\lambda}})$ in $(0,1]^K$. Since $\underline P = L({\boldsymbol{\lambda}})$, the above statement follows easily.


\section{Non-linear Transceiver Design}
When a non-linear receiver with DFE is employed, the vector $\mathbf{y}$ at the input of the decision device (under the assumption of correct previous decisions) can be written as 

\begin{equation}\nonumber\label{1.3}
{\bf{y}} = \left({\bf{G}}\mathbf{H}\mathbf{U} - \mathbf{B} \right){\bf{s}}+  {\bf{G}}{\bf{n}}
\end{equation}
where $\mathbf{B}$ is a strictly upper triangular matrix $\mathbf{B}$ of order $K$. The power minimization problem is formulated as:
\begin{equation}\label{4.1}
\mathop {\min }\limits_{{\bf{U}},{\bf{F}},{\bf{G}},{\bf{B}}} \;\; P_T \quad{\rm{s}}{\rm{.t}}{\rm{.}}\;\; \left[ {\bf{E}}\right]_{n,n}\le \eta_n\;\;{\rm{for}}\;\; n=1,2,\ldots,K
\end{equation}
where $P_T$ is given by \eqref{2.4} while ${\bf{E}}$ denotes the MSE matrix. 

As proven in \cite{Rong2011}, the optimal ${\bf{G}}^{(o)}$ in \eqref{4.1} is the Wiener filter while ${\bf{B}}^{(o)}$ takes the form ${\bf{B}}^{(o)}= {\bf{D}}{\bf{L}}^H - {\bf{I}}_{K}$. The matrix ${\bf{L}}$ is lower triangular and such that 

\begin{equation}\nonumber
{\bf{L}}{\bf{L}}^H = {\mathbf{U}^{(o)^H}\mathbf{H}^{(o)^H}\mathbf{R}_{\bf{n}}^{(o)^{-1}}\mathbf{H}^{(o)}\mathbf{U}^{(o)} +
\rho {\bf{I}}_{K}}
\end{equation}
with $\mathbf{H}^{(o)} = {\bf{H}}_{2}{\bf{F}}^{(o)}{\bf{H}}_{1}$ and $\mathbf{R}_\mathbf{n}^{(o)} ={\bf{H}}_2{\bf{F}}^{(o)}{\bf{F}}^{(o)^H}{\bf{H}}_2^{H} + {\bf{I}}_M$ while ${\bf{D}}$ is diagonal and designed so as to scale to unity the entries $[{\bf{D}}{\bf{L}}^{H}]_{n,n}$ for $n=1,2,\ldots,K$. The optimal ${\mathbf{U}}^{(o)}$ and ${\mathbf{F}}^{(o)}$ take the form 
\begin{equation}\nonumber\label{4.12}
{\mathbf{U}}^{(o)} ={\tilde{\mathbf{V}}_{H_1}}{\mathbf{\Lambda}}_{U}^{1/2}\mathbf{S}^H \quad {\text{and}} \quad {\mathbf{F}}^{(o)}= \tilde{\mathbf{V}}_{H_2} {\mathbf{\Lambda}}_F^{1/2}\tilde{\mathbf{\Omega}}_{H_1}^H
\end{equation}
where $\mathbf{S}$ is a suitable $K \times K$ unitary matrix such that 
\begin{equation}\label{4.13.1}
\left[{\bf{L}}\right]_{n,n}^{-1} = \sqrt{\eta_n}\;\; \mathrm{for} \;n=1,2,\ldots,K.
\end{equation}
The entries of the diagonal matrices $\mathbf{\Lambda}_U$ and $\mathbf{\Lambda}_F$ are still given by \eqref{3.4} with the quantities $A^{(o)}_n$ and $B^{(o)}_n$ now obtained as:
\begin{equation}\label{4.13}
\mathop {\min }\limits_{\{A_{n} \, \ge \, 0\}, \{B_{n} \, \ge \, 0\}} \;\;\sum\limits_{n=1}^K \rho \left(\frac{A_{n}}{\lambda_{H_1,n}} + \frac{B_{n}}{\lambda_{H_2,n}}\right)\quad\quad\quad\quad\quad\quad\quad\quad\quad\quad\;
\end{equation}
\begin{equation}\nonumber
\begin{array}{l}
\quad\quad{\rm{s}}{\rm{.t}}{\rm{.}}\quad  \prod \nolimits_{n=1}^{j} {\lambda_{n}}\le\prod\nolimits_{n=1}^{j}\eta_n\;\; {\rm{for}} \;\; j=1,2,\ldots,K\quad\quad\quad\quad\quad
\\ 
\quad\quad \quad\quad\;\;{0 < \lambda_{n} \le 1} \quad n =1,2,\ldots,K \\ 
\;\;\quad\quad \quad\quad{\lambda_{n} \le \lambda_{n+1}}\quad n =1,2,\ldots,K-1.
 \end{array}
\end{equation}
As for the linear case, the optimal processing matrices lead to a channel-diagonalizing structure provided that the symbols are properly rotated at the source and destination nodes by a unitary matrix $\bf{S}$ chosen such that \eqref{4.13.1} is satisfied. This is achieved through the iterative algorithm illustrated in \cite{Jiang06thegeneralized}.

Following the same procedure as in the linear case, we first minimize over $A_{n}$ and $B_{n}$ for a fixed $\lambda_{n}$ and then over all possible $\lambda_{n}$ within the feasible set of \eqref{4.13}. The first step yields
\begin{equation}\label{4.12.0}
\mathop {\min }\limits_{\{\lambda_{n} \}} \;\;P({\boldsymbol{\lambda}}) = \sum\limits_{n=1}^K P_n\left(\lambda_{n}\right)\quad\quad\quad\quad\quad\quad\quad\quad
\end{equation}
\begin{equation}\nonumber
\begin{array}{l}
\quad\quad{\rm{s}}{\rm{.t}}{\rm{.}}\quad  \prod \nolimits_{n=1}^{j} {\lambda_{n}}\le\prod \nolimits_{n=1}^{j}\eta_n\;\; {\rm{for}} \;\; j=1,2,\ldots,K\quad\quad\quad\quad\quad
\\ 
\quad\quad \quad\quad\;\;{0 < \lambda_{n} \le 1} \quad n =1,2,\ldots,K \\ 
\;\;\quad\quad \quad\quad{\lambda_{n} \le \lambda_{n+1}}\quad n =1,2,\ldots,K-1.
 \end{array}
\end{equation}
Observe that the feasible set of the above problem is not convex since the inequality constraints
\begin{equation}\label{4.12.1}
\prod \nolimits_{n=1}^{j} {\lambda_{n}}\le\prod \nolimits_{n=1}^{j}\eta_n\;\; {\rm{for}} \;\; j=1,2,\ldots,K
\end{equation}
do not form a convex set for $K\ge 2$ \cite{BoydBook}. Letting 
\begin{equation}\nonumber
\theta_n = \ln \lambda_n \quad \text{and}\quad \kappa_n = \ln \eta_n
\end{equation}
the problem can be readily reformulated as follows
\begin{equation}\label{4.12}
\mathop {\min }\limits_{\boldsymbol{\theta} \, \in \, \mathcal N} \;\;Q({\boldsymbol{\theta}}) = \sum\limits_{n=1}^K Q_n\left(\theta_{n}\right) \end{equation}
where $\boldsymbol{\theta} = [\theta_1,\theta_2,\ldots,\theta_K]^T$ and $Q_n\left(\theta_{n}\right)$ takes the form
\begin{equation}\label{func_Q}
Q_n\left(\theta_{n}\right) = \frac{\rho}{\sqrt{\lambda_{H_1,n}\lambda_{H_2,n}}}\left[\gamma_{n} \left(\frac{1}{e^{\theta_n}} - 1\right) +  2 \frac{\sqrt{1- e^{\theta_n}}}{e^{\theta_n}}\right]
\end{equation} 
while $\mathcal N$ is defined as
\begin{equation}\nonumber
\mathcal N = \left\{\boldsymbol{\theta} \in \RR^K: \boldsymbol{\kappa} \prec^{+(w)} \boldsymbol{\theta} , \theta_{n} \le 0\, \text{and}\, {\theta_{n} \le \theta_{n+1}}\right\}
\end{equation}
with $\boldsymbol{\kappa} = [\kappa_1,\kappa_2,\ldots,\kappa_K]^T$. 

Although $\mathcal N$ is convex, the problem \eqref{4.12} is not in a convex form due to the properties of function $Q_n(\theta_n)$. The latter is shown in Appendix D to be 
convex for $\theta_{n} \le \ln \phi_{n}$ and concave for $\ln \phi_{n} < \theta_{n} \le 0$ (see Fig. 2) where $\phi_{n}$ is given by 
\begin{equation}
\label{4.18}
\phi_{n}=1 - \frac{1}{2+\frac{\gamma_n}{\sqrt {\chi_n}} + \sqrt{\gamma_n \sqrt{\chi_n} + 5 - \frac{\gamma_n^2}{\chi_n}}}
\end{equation}
with
\begin{align}\nonumber
& \chi_n=\frac{8}{3} + \frac{\sqrt[3]{2}}{3} \sqrt[3]{27\gamma_n^2 - 104 + \sqrt{(27 \gamma_n^2 - 104)^2-16}}  + \\ & \qquad \qquad + \frac{\sqrt[3]{2}}{3} \sqrt[3]{27\gamma_n^2 - 104 - \sqrt{(27 \gamma_n^2 - 104)^2-16}}.\nonumber
\end{align}
The minimum value of $\phi_{n}$ is achieved for $\gamma_n=2$ and given by (see Appendix D)
\begin{equation}
\label{4.18.1}
\phi_{\min}=2\left(\sqrt{2} - 1\right).
\end{equation}
It is worth observing that the above result cannot be used as done in the previous section to derive a sufficient condition on the QoS requirements under which the convexity of \eqref{4.12} can be established. 

As for the linear case, in the next \eqref{4.12} is replaced with a convex problem easy to solve whose solution is shown to be close to the optimal one.

\begin{figure}[t]
\begin{center}
\includegraphics[width=.45\textwidth]{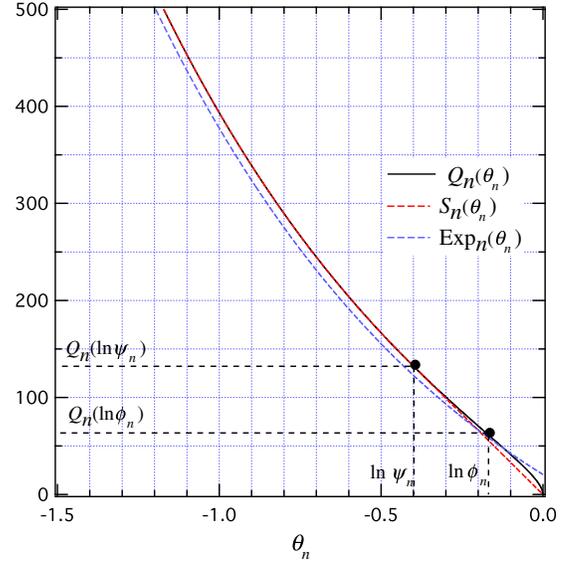}
\end{center}
\caption{Graphical illustration of $Q_n(\theta_{n})$, $\text{Exp}_n(\theta_{n})$ and $S_n(\theta_{n})$ for $-1.5\le \theta_{n} \le 0$.}
\label{Fig1}
\end{figure}

\subsection{Exponential-based approximation}
Using the same methodology illustrated in the previous section, $Q_n(\theta_{n}) $ is replaced by the following exponential function: 

\begin{equation}\nonumber
\text{Exp}_n(\theta_{n}) = \frac{w_n} {e^{\theta_n}} + z_n 
\end{equation}
where $w_n$ and $z_n$ are obtained as follows
 \begin{equation}\nonumber
(w_n,z_n)=\arg\mathop {\min }\limits_{\tilde w, \tilde z} \mathop {\sup }\limits_{\theta_{n} \in  (-\infty,0]}  \left|Q_n\left(\theta_{n}\right) - \frac{\tilde w} {e^{\theta_n}} -  \tilde z\right|.  
\end{equation} 
Using standard analysis not shown for space limitations, it is found that $w_n$ and $z_n$ are still given by \eqref{3.42} and \eqref{3.43}.

Then, the non-convex power allocation problem in \eqref{4.12} is approximated with the following convex one:
\begin{equation}\label{4.44}
\mathop {\min }\limits_{\boldsymbol{\theta} \, \in \, \mathcal N} \;\;\sum\limits_{n=1}^K \frac{w_n} {e^{\theta_n}}
\end{equation}
whose solution $\boldsymbol{\theta}^{\star} = [{\theta}^{\star}_1, {\theta}^{\star}_2,\ldots,{\theta}^{\star}_K]^T$ is easily proven to be such that ${\theta}^{\star}_n \le {\theta}^{\star}_{n+1}$. Interestingly, it turns out that $\boldsymbol{\theta}^{\star}$ can be computed using an iterative procedure derived following the same line of reasoning adopted to obtain {\bf{Algorithm 1}}.

\begin{proposition} 
The solution to \eqref{4.44} can be obtained through the iterative procedure illustrated in {\bf{Algorithm 2}}.
\end{proposition}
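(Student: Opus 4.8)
The plan is to prove Proposition 2 by transplanting, after the change of variables $\theta_n = \ln\lambda_n$, the Karush--Kuhn--Tucker (KKT) argument used for the linear case in Appendix C. The first observation is that \eqref{4.44} is a genuinely convex program: the objective $\sum_n w_n e^{-\theta_n}$ is a sum of strictly convex functions and $\mathcal N$ is convex with affine constraints, so the KKT conditions are both necessary and sufficient for global optimality and the minimizer is unique. As already noted in the text preceding the proposition, the ordering constraints $\theta_n \le \theta_{n+1}$ can be dropped: since $w_n \le w_{n+1}$, the same exchange argument invoked for \eqref{3.44} shows that the solution of the relaxed problem automatically satisfies them. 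Hence it suffices to handle the weak-majorization constraints, which read $\sum_{n=1}^j \theta_n \le \delta_j$ with $\delta_j = \sum_{n=1}^j \kappa_n$, together with the box constraints $\theta_n \le 0$.

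First I would form the Lagrangian, with multipliers $\mu_j \ge 0$ for the $j$th partial-sum constraint and $\nu_n \ge 0$ for $\theta_n \le 0$, and impose stationarity. Writing $\sigma_n := \sum_{j=n}^K \mu_j$, so that $\sigma_n$ is non-increasing in $n$, the stationarity condition becomes $w_n e^{-\theta_n} = \sigma_n + \nu_n$. For an index with $\theta_n < 0$ one has $\nu_n = 0$ and therefore $\theta_n = \ln w_n - \ln\sigma_n$, whereas a clipped index contributes $\theta_n = 0$. Because $w_n \le w_{n+1}$ and $\sigma_n \ge \sigma_{n+1}$, this expression is automatically non-decreasing in $n$, which is consistent with the ordering constraints dropped above.

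The core of the argument is then to show that the optimal multipliers induce a partition of $\{1,\dots,K\}$ into consecutive blocks on which $\sigma_n$ is constant and across whose boundaries the majorization constraints are active; within a block with endpoints $\ell$ and $m$ the common value of $\sigma_n$ is fixed by summing $\theta_n = \ln w_n - \ln\sigma$ over the block and matching the total to the residual budget $\delta_m - \delta_\ell$. This yields precisely the per-step closed form evaluated by Algorithm 2, which sweeps the indices backward from $n=K$ to $n=1$, at each step selecting the current block's left endpoint $\ell$ through a maximization over candidate break points (the one producing the smallest tail multiplier, hence the largest admissible $\theta$) and updating the residual budget $\delta$. I would close the proof by verifying that the point returned by Algorithm 2 meets all KKT conditions: primal feasibility (the partial-sum and box constraints hold with equality exactly where the algorithm enforces them), stationarity (by construction of the block values together with the clipping $\theta_n^\star = \min(0,\cdot)$), and complementary slackness.

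The main obstacle I anticipate is dual feasibility, i.e.\ showing that the multipliers reconstructed from the algorithm's output are nonnegative. Concretely, one must prove that the break point chosen by the maximization yields tail multipliers $\sigma_n$ that are non-increasing across successive blocks, which is equivalent to $\mu_j = \sigma_j - \sigma_{j+1} \ge 0$; this is exactly what justifies a \emph{maximum} (rather than an arbitrary choice) as the correct selection rule, and it is where the monotonicity $w_n \le w_{n+1}$ and the backward sweep interact most delicately. This step mirrors the verification carried out for Algorithm 1 in Appendix C, the only substantive change being that the reciprocal form $\sqrt{w_n}/\sqrt{\sigma_n}$ of the linear solution is replaced by the logarithmic form $\ln w_n - \ln\sigma_n$ dictated by the exponential objective.
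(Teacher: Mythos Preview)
Your proposal is correct and follows essentially the same KKT-based route as the paper: both arguments drop the ordering constraints, write stationarity with cumulative multipliers $\sigma_n=\sum_{j\ge n}\mu_j$, recover the per-block closed form, and handle the box constraint via the clipping $\theta_K^\star=\min(0,\hat\theta_K)$ before reducing to a $(K-1)$-dimensional instance. The only organizational difference is that the paper (Appendix~E, mirroring Appendix~C) derives the $\max$-formula for $\hat\theta_K$ through a hierarchy of relaxed problems $\mathcal P^{(i)}$ and the recursion $\hat\theta_K^{(i+1)}=\max(\hat\theta_K^{(i)},\cdot)$, whereas you propose a direct block-structure verification of the KKT conditions; the dual-feasibility step you flag as the main obstacle is exactly what that recursion establishes.
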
 
\begin{proof}
See Appendix E.
\end{proof}

%

Once $\boldsymbol{\theta}^{\star}$ is computed through {\bf{Algorithm 2}}, it is then used to approximate $\lambda^{(o)}_{U,n}$ and $\lambda^{(o)}_{F,n}$ exactly in the same way illustrated for the linear case. This produces
\begin{equation} \nonumber
\lambda^{\star}_{U,n}  =
   \frac{\rho}{\lambda_{H_1,n}}\left(\frac{1-e^{{\theta}_n^{\star}}}{e^{{\theta}_n^{\star}}} + \sqrt{\frac{\lambda_{H_1,n}} {\lambda_{H_2,n}}} \frac{\sqrt{1-e^{{\theta}_n^{\star}}}}{e^{{\theta}_n^{\star}}}\right)
\end{equation}
and
\begin{equation} \nonumber
 \lambda^{\star}_{F,n} =
   \frac{1}{\lambda_{H_2,n}}\frac{\sqrt{\frac{\lambda_{H_1,n}} {\lambda_{H_2,n}}}({1-e^{{\theta}_n^{\star}}}) + \sqrt{1-e^{{\theta}_n^{\star}}}}{\sqrt{\frac{\lambda_{H_2,n}} {\lambda_{H_1,n}}} + \sqrt{1-e^{{\theta}_n^{\star}}}}
\end{equation}
where we have used the fact that $\lambda_n^\star =e^{{\theta}_n^{\star}}$.

%
%
%
%
%
%
%

\begin{algorithm}[t]
\caption{Multi-step procedure for solving \eqref{4.44}.}

1) Set $i=0$, $\delta_0 =0$ and compute 
\begin{equation}\nonumber\label{3.70}
\delta_j = \sum\limits_{n=1}^{j}\kappa_n = \sum\limits_{n=1}^{j}\ln \eta_n
\end{equation}
for $j = 1,2,\ldots,K$.

\vspace{0.1cm}

2) {{\bf{While}}} $i < K$: Compute
\begin{equation}\nonumber\label{3.70.1000}
\theta_{K-i}^\star = \min(1,\hat \theta_{K-i})
\end{equation}
where
\begin{equation}\nonumber\label{3.70}
\hat \theta_{K-i}= \ln w_{K-i} + 
\mathop {\max }\limits_{\ell=0,1,\ldots,K-i-1} \dfrac{ \delta_{K-i} - \delta_{\ell}- \sum\nolimits_{j=\ell+1}^{K-i} \ln w_j}{K-i - \ell}.
\end{equation}
Set 
\begin{equation}\nonumber
\delta_j = \left\{ {\begin{array}{*{20}c}
   \delta_j &  \;\;\;{\rm{for}}\;j=0,1,\ldots,K-i-2  \\
\delta_{K-i} - \theta_{K-i}^\star &  \!\!\!\!\!\!\!\!\!\!\!\!\!\!\!\!{\rm{for}}\;j=K-i-1  \\
\end{array}} \right.
\end{equation}
and $i=i+1$.
%
\end{algorithm}
In order to validate the quality of $\boldsymbol{\theta}^{\star}$, a lower bound is now computed. Paralleling the steps of the previous section, the lower bound is obtained as $\underline Q = S (\underline {\boldsymbol{\theta}})$ where $\underline {\boldsymbol{\theta}}$ is the solution of the following problem:
\begin{equation}\label{4.28}
\mathop {\min }\limits_{\boldsymbol{\theta}\, \in \, \mathcal N} \;\;S({\boldsymbol{\theta}}) = \sum\limits_{n=1}^K S_n\left(\theta_{n}\right).
\end{equation}
As shown in Fig. 2, $S_n\left(\theta_{n}\right)$ is given by
\begin{equation}\nonumber\label{4.29}
\!{S}_n\left(\theta_{n}\right) \!\!= \!\!\left\{ {\begin{array}{*{20}c}
   \!\!\!\!Q_n\left(\theta_{n}\right), & \!\!\!\theta_{n} \le \ln\psi_{n} \\ 
    \!\!\!Q_n(\ln \psi_{n}) \!+ \!Q_n'(\ln \psi_{n}) \!\!\left(\theta_{n} - \ln \psi_{n} \right), &  \!\!\!\ln\psi_{n} < \theta_{n} \le 0   \\
\end{array}} \right.
\end{equation}
where 
\begin{equation}\nonumber
Q_n^{'}\left(\ln \psi_{n}\right) = - \frac{\rho }{\psi_{n}\sqrt{\lambda_{H_1,n}\lambda_{H_2,n}}} \left(\gamma_{n} + \frac{2 - {\psi_{n}}}{\sqrt{1-{\psi_{n}}}}\right)
\end{equation}
is the first derivative of $Q_n(\theta_{n})$ evaluated at $\theta_{n} = \ln \psi_{n}$ which represents the abscissa of the intersection point between $Q_n\left(\theta_{n}\right)$ and the line passing through the origin and tangent to $Q_n(\theta_{n})$. In Appendix D, it shown that $\psi_{n}$ is such that  
\begin{equation}\label{4.32}
\gamma_n\sqrt{1 - \psi_{n}} +  2  +\frac{\ln \psi_{n}}{\sqrt{1 - \psi_{n}}} \left(\gamma_n+\frac{2 - {\psi_{n}}}{\sqrt{1-{\psi_{n}}}}\right)= 0.
\end{equation}
Unfortunately, the solution of the above equation cannot be computed in closed-form but it can only be evaluated numerically.

%
%
%
%
As for the linear case, numerical results shown later demonstrate that the difference between $Q(\boldsymbol{\theta}^{\star})$ and $\underline Q$ is negligible. Moreover, $\underline Q$ still represents the best convex lower approximation of the original problem in \eqref{4.12} .

\section{Numerical results}

Numerical results are now given to highlight the effectiveness of the proposed solutions. Comparisons are made with the successive geometric programming (GP) approach illustrated in \cite{Rong2011}. The CVX convex optimization toolbox for MATLAB is used to solve the optimization problem. The number of antennas employed at the source, relay and destination nodes is equal, i.e., $N=M$. The number of transmitted symbols is fixed to $K=N$. The entries of ${\bf{H}}_1$ and ${\bf{H}}_2$ are modeled as independent complex circularly symmetric Gaussian random variable with zero mean and variance $1/N$. All numerical results are obtained averaging over $10^3$ independent realizations of the channels.

\begin{figure}[t]
\begin{center}
\includegraphics[width=.45\textwidth]{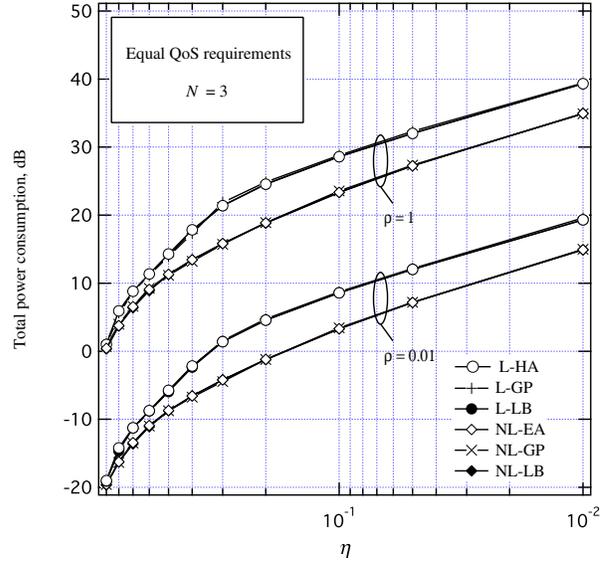}
\end{center}
\caption{Total power consumption when equal QoS constraints are given with $N=3$ and $\rho = 1$ or $0.01$.}
\label{picture1}
\end{figure}

\begin{figure}[t]
\begin{center}
\includegraphics[width=.45\textwidth]{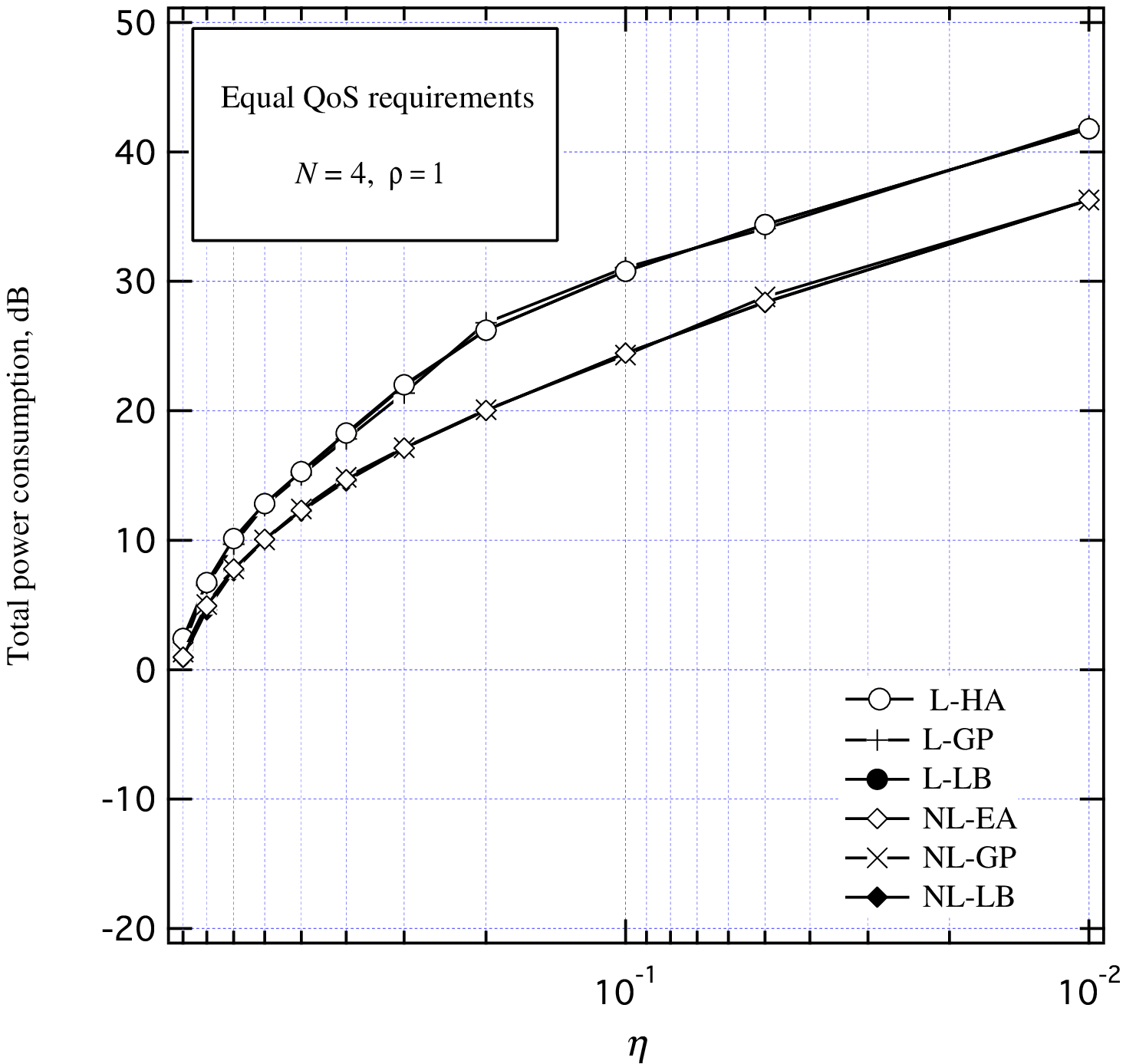}
\end{center}
\caption{Total power consumption when equal QoS constraints are given with $N=4$ and $\rho = 1$.}
\label{picture1}
\end{figure}

\begin{table}[t]
\caption{Total power consumption when equal QoS constraints are given with $N=3$ and $\rho = 1$.}
\label{table01}\renewcommand{\arraystretch}{1.3} \centering
\begin{tabular}{|c||c||c||c||c||c|}
\hline
Algorithm & $\eta=0.9$ & $\eta=0.5$ & $\eta=0.1$ & $\eta=0.05$ & $\eta=0.01$\\ \hline
L-HA & $1.001$ & $14.211$ & $28.5907$ & $32.020$ & $39.316$\\ \hline
L-GP & $1.109$ & $14.254$ & $28.612$ & $32.121$ & $39.321$ \\ \hline
L-LB & $0.990$ & $14.113$ & $28.5907$ & $32.020$ & $39.316$ \\ \hline
NL-EA & $0.356$ & $11.291$ & $23.351$ & $27.248$ & $34.960$ \\ \hline
NL-GP & $0.361$ & $11.314$ & $23.359$ & $27.250$ & $34.981$ \\ \hline
NL-LB & $0.325$ & $11.252$ & $23.330$ & $27.247$ & $34.959$ \\ \hline
\end{tabular}%
\end{table}

\begin{figure}[t]
\begin{center}
\includegraphics[width=.45\textwidth]{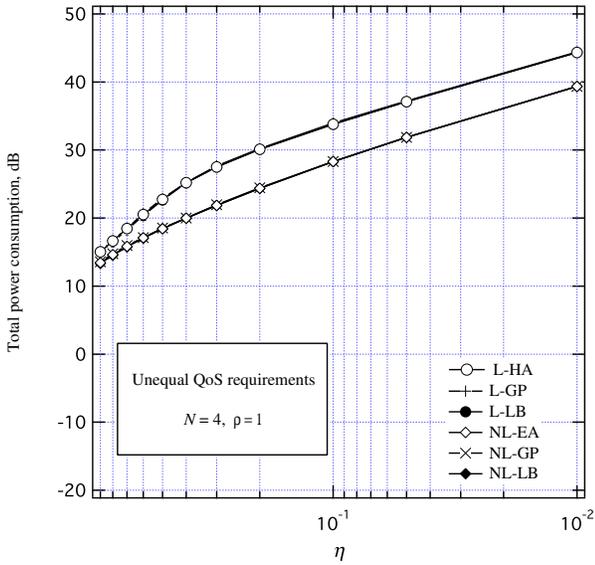}
\end{center}
\caption{Total power consumption when unequal QoS constraints are given with $N=4$ and $\rho = 1$.}
\label{picture1}
\end{figure}

Fig. 3 illustrates the total power consumption as a function of the QoS constraints when $N=3$ and the noise variance is $\rho = 1$ or $0.01$. The same QoS constraint is imposed for each data stream, i.e., $\eta_n = \eta$ for $n=1,2,\ldots,K$. The curves labelled with L-HA and NL-EA refer to a system in which the power is allocated using {\bf{Algorithm 1}} and {\bf{Algorithm 2}}, respectively. On the other hand, L-GP and NL-GP refer to a system in which the successive GP approach of \cite{Rong2011} is employed in conjunction with a linear or a nonlinear receiver, respectively. The results of Fig. 3 indicate that the total power consumption of the proposed solutions is substantially the same as that obtained with the schemes discussed in \cite{Rong2011}. However, this is achieved with much lower complexity as L-HA and NL-EA are obtained in closed form by means of {\bf{Algorithm 1}} and {\bf{2}}. Moreover, both algorithms attain the corresponding lower bounds denoted by L-LB and NL-LB and evaluated solving \eqref{3.28} and \eqref{4.28}. To ease comparisons, some of the results of Fig. 3 are also listed in Table \ref{table01}. As already shown in \cite{Rong2011}, it turns out that the non-linear architecture provides the best performance for all the investigated values of $\eta$. As expected, increasing the noise variance from $0.01$ to $1$ enhances the required power of all the investigated solutions of approximately $20$ dB. Similar conclusions can be drawn from the results of Fig. 4 in which $N=4$ and $\rho = 1$. 

The results of Fig. 5 are obtained in the same operating conditions of Fig. 4 except that now  $\eta_1 = \eta/4$, $\eta_2 = \eta_3 = \eta/2$ and $\eta_4 = \eta$ and $\rho = 1$. Compared to the results of Fig. 4, the total power consumption increases due to the more stringent requirements over some established links.

\section{Conclusions}

We have discussed the power allocation problem in a two-hop MIMO relay network in which the goal is the minimization of the total power consumption while satisfying different QoS requirements given in terms of the MSEs. The original non-convex power allocation problem has been approximated for linear and non-linear architectures with a convex one that can be solved exactly through a multi-step procedure of reduced complexity. Comparisons with existing alternatives requiring much higher computational burden show that the same total power consumption is required. It is worth observing that the extension of the proposed approach to the case in which multi-hops are used to carry the information from the source to the destination is not simple and it is currently under investigation.

\section*{Appendix A}
Keeping $\lambda_{n}$ fixed, from \eqref{3.7} we have that
\begin{equation}\label{A.1}
B_{n}=\frac{A_n}{\lambda_{n} + \lambda_{n}A_n - 1} -1
\end{equation} 
while $\rho ({A_{n}}/{\lambda_{H_1,n}} + {B_{n}}/{\lambda_{H_2,n}})$ reduces to
\begin{align}\nonumber
&\rho \left(\frac{A_{n}}{\lambda_{H_1,n}} + \frac{B_{n}}{\lambda_{H_2,n}}\right) = \\ &\qquad \rho \left[\frac{A_n}{\lambda_{H_1,n}} + \frac{1}{\lambda_{H_2,n}}\left(\frac{A_n}{\lambda_{n} + \lambda_{n}A_n - 1} -1\right)\right].\nonumber
\end{align} 
Taking the derivative with respect to $A_{n}$ and equating it to zero yields
\begin{equation}\nonumber
A_{n}=\frac{1-\lambda_{n}}{\lambda_{n}}
 \pm  \sqrt{\frac{\lambda_{H_1,n}} {\lambda_{H_2,n}}} \frac{\sqrt{1-\lambda_{n}}}{\lambda_{n}}
\end{equation} 
from which using \eqref{A.1} yields
\begin{equation}\nonumber
B_{n}=\frac{1-\lambda_{n}}{\lambda_{n}}
 \pm   \sqrt{\frac{\lambda_{H_2,n}} {\lambda_{H_1,n}}} \frac{\sqrt{1-\lambda_{n}}}{\lambda_{n}}.
\end{equation} 
We now observe that the solution
\begin{equation}\label{A.2}
A_{n}=\frac{1-\lambda_{n}}{\lambda_{n}}
-  \sqrt{\frac{\lambda_{H_1,n}} {\lambda_{H_2,n}}} \frac{\sqrt{1-\lambda_{n}}}{\lambda_{n}} \quad 
\end{equation} 
and
\begin{equation}\label{A.2.1}
B_{n}=\frac{1-\lambda_{n}}{\lambda_{n}}
-  \sqrt{\frac{\lambda_{H_2,n}} {\lambda_{H_1,n}}} \frac{\sqrt{1-\lambda_{n}}}{\lambda_{n}}
\end{equation} 
is not admissible as it violates one of the constraints $A_{n} \ge 0$ and $B_{n} \ge 0$ in (\ref{3.6}). To see how this comes about, note that $A_{n}$ in (\ref{A.2}) is larger than zero only if ${{\lambda_{H_1,n}}} \le {\lambda_{H_2,n}} (1-\lambda_{n})$. On the other hand, from (\ref{A.2.1}) we have that $B_{n}\ge 0$ only if $\lambda_{H_2,n} \le \lambda_{H_2,n} (1-\lambda_{n})$. Collecting these two facts together, it is seen that the constraints $A_{n} \ge 0$ and $B_{n} \ge 0$ are both satisfied when $\lambda_{H_1,n} /(1-\lambda_{n}) \le \lambda_{H_2,n} \le \lambda_{H_1,n} (1-\lambda_{n})$. Since $0 < \lambda_{n} \le 1$ it follows that such inequalities cannot be ensured by any value of $\lambda_{H_1,n}$ and $\lambda_{H_2,n}$. Then, the optimal $A_{n}$ and $B_{n}$ are given by \eqref{3.8} and \eqref{3.9}, respectively.

\section*{Appendix B}


The convexity of the objective function $P_n(\lambda_{n})$ in (\ref{3.12}) on $\mathcal P = \{\lambda_{n}; 0 < \lambda_{n} \le 1\}$ is studied. For notational convenience, the index $n$ is omitted. 

We start taking the derivative of $P\left(\lambda\right)$ in (\ref{3.10}) with respect to $\lambda$ and obtain 
\begin{equation}\label{B.3}
P{'}\left(\lambda\right) =- \frac{\rho}{ \lambda^2\sqrt{\lambda_{H_1}\lambda_{H_2}}} \cdot\left( \gamma + \frac{2-\lambda}{\sqrt{1-\lambda}} \right) 
\end{equation} 
from which it easily follows that $P{'}(\lambda) < 0$ for all $\lambda \in \{\mathcal P\setminus 1\}$. This means that $P\left(\lambda\right)$ is a decreasing function of $\lambda$. Taking the derivative of $P'\left(\lambda\right)$ yields
\begin{equation}\label{B.4}
P{''}\left(\lambda\right) =\frac{\rho}{2\lambda^3\sqrt{\lambda_{H_1}\lambda_{H_2}}} \cdot \frac{f(\lambda) - \gamma g(\lambda)}{\left(1-\lambda\right)\sqrt{1-\lambda}}
\end{equation} 
%
where we have defined $f(\lambda) = 3\lambda^2 - 12\lambda + 8$ and $g( \gamma) = -4\left(1-\lambda\right)\sqrt{1-\lambda}$. Setting $f( \lambda) =  \gamma g(\lambda)$ 
leads to the following quartic equation
\begin{equation}\nonumber
9\lambda^4 - 8\left(9-2\gamma^{2}\right)\lambda^3+\left(4-\gamma^{2}\right)\left(48\lambda^2-48\lambda+16\right)= 0.
\end{equation}
After lengthy computations (not shown for space limitations) it turns out that there exists an \emph{unique} point $\lambda$ in $\mathcal P$ solving the above equation. For $\gamma \ne 2$, such a point $\lambda = \alpha$ is given by \eqref{3.13} in the text. On the other hand, setting $\gamma= 2$ into the quartic equation produces $\lambda^3(9 - 8\lambda)=0$ from which it follows that $\alpha = 8/9$. Once $\alpha$ has been computed through \eqref{3.13} and \eqref{3.14}, we have that $f(\lambda) - \gamma g( \lambda) \ge 0$ for any $0 < \lambda \le \alpha$. This amounts to saying that $P{''}\left(\lambda\right) \ge 0$ or, equivalently, that $P\left(\lambda\right)$ is convex over the convex set $0 < \lambda \le \alpha$. Vice versa, for any $\alpha < \lambda < 1$ we have  $f(\lambda) - \gamma g(\lambda) < 0$ so that $P{''}\left(\lambda\right) < 0$ and the function $P\left(\lambda\right)$ is concave.

A close inspection of \eqref{3.13} and \eqref{3.14} reveals that $\alpha$ depends exclusively on $\gamma$. We are now interested in computing its minimum value $\alpha_{{\rm{min}}}$ as $\gamma$ varies. Although $\alpha_{{\rm{min}}}$ could be in principle computed using the closed-form expression given in \eqref{3.13}, a more simple line of reasoning is followed henceforth. Observe first that $f(\lambda)$ in (\ref{B.4}) is a convex and decreasing function while $ g(\lambda)$ is a negative, concave and increasing function in $\mathcal P$. Moreover, recall that $\gamma$ is a positive function of $\lambda_{H_1}$ and $\lambda_{H_2}$ taking values in the interval $[2,\infty)$. Collecting these facts together, it follows that the point $\alpha$ such that $f(\alpha) = \gamma g(\alpha)$ moves towards the upper limit of $\mathcal P$ as $\gamma$ increases. This means that its maximum value is given by $\alpha_{{\rm{max}}}=1$ and it is achieved when $\gamma$ goes to infinity. On the other hand, $\alpha_{{\rm{min}}}$ is achieved when $\gamma$ takes its minimum value, i.e., $\gamma = 2$. As shown before, this yields $\alpha_{{\rm{min}}} = 8/9$.

\section*{Appendix C}

In the next, we highlight the major steps leading to the solution of (\ref{3.44}) in the form given by {\bf{Algorithm 1}}. For this purpose, we adopt the following approach. We first compute $\hat \lambda_K$, which represents the largest solution of the problem obtained from (\ref{3.44}) after removing the constraints $\lambda_{n} \le 1$ for $n=1,2,\ldots, K$. Interestingly, $\hat \lambda_K$ can be obtained in closed-form and it can be efficiently used to compute $\lambda_K^\star$. All the above results are then used to derive a simple iterative procedure providing all the remaining solutions $\lambda_{n}^\star$ for $n=1,2,\ldots,K-1$ in no more than $K-1$ iterations.

The first step requires to find the largest solution of the following problem:
\begin{equation}\label{C.2}
    \begin{array}{cll}
      \;\quad  \underset{\{0\,<\, \lambda_n\}}{\min} &
\sum\nolimits_{n=1}^{K}\dfrac{w_n}{\lambda_n}
&  \\
      \;\;\quad \mathrm{s.t.} & \sum\nolimits_{n=1}^{j}\lambda_n \le \delta _j & \quad j=1,2,\ldots,K
    \end{array}
\end{equation}
where
\begin{equation}\label{C.1}
\delta _j = \sum \nolimits_{n=1}^{j} \eta_n.
\end{equation}
To this end, we denote by $\mathcal{P}^{(i)}$ for $i=0,1,2,\ldots,K-1$ the optimization problem obtained from (\ref{C.2}) by removing the inequality constraints from the $(i+1)$th to the $(K-1)$th, i.e.,
\begin{equation}\nonumber
    \begin{array}{cll}
      \mathcal{P}^{(i)}: \quad \quad\underset{\{0\,<\, \lambda_n\}}{\min} &
\sum\nolimits_{n=1}^{K}\dfrac{w_n}{\lambda_n}
&  \\
      \;\;\quad\quad\quad\quad\quad \mathrm{s.t.} & \sum\nolimits_{n=1}^{j}\lambda_n \le \delta _j & \quad j=1,2,\ldots,i
      \\
      & \sum\nolimits_{n=1}^{K}\lambda_n \le \delta _K &
    \end{array}
\end{equation}
from which it follows that $\mathcal{P}^{(K-1)}$ is equivalent to (\ref{C.2}) while $\mathcal{P}^{(0)}$ is obtained from (\ref{C.2}) by removing all the constraints except the last. 
Since the above problem satisfies the Slater's condition, the $n$th solution of $\mathcal P^{(i)}$, say $\hat {\lambda}_n^{(i)}$, is found from the necessary and sufficient Karush-Kuhn-Tucker (KKT) optimality conditions. This yields
\begin{equation}\label{C.6}
\; \hat {\lambda}_n^{(i)} =    \sqrt{\frac{{w_n}}{{\sum\nolimits_{j=n}^{i} \varsigma_j^{(i)}+\varsigma_K^{(i)}}}}\quad \; {\rm{for}}\; n=1,2,\ldots,i \quad\quad\;
\end{equation}
and
\begin{equation}\label{C.6.1}
\hat {\lambda}_n^{(i)} =  \sqrt{\frac{{w_n}}{{\varsigma_K^{(i)}}}}\quad \quad {\rm{for}}\; n=i+1,i+2,\ldots,K
\end{equation}
where the Lagrange multipliers  $\varsigma_j^{(i)}$ are chosen to satisfy the following constraints:
\begin{equation}\label{C.7}
0 \le \varsigma_j^{(i)} \bot \left(\delta _j - \sum\nolimits_{n=1}^{j}\hat{\lambda}_n^{(i)}\right)\ge 0 
\end{equation}
for $j=1,2, \ldots,i $ and $j = K$.
From \eqref{C.6.1}, it is found that 
\begin{equation}
\label{nvsK}
\hat{\lambda}^{(i)}_{n}=\hat{\lambda}^{(i)}_{K}\frac{\sqrt{w_{n}}}{\sqrt{w_{K}}} \quad \quad \mathrm{for} \;n=i+1,i+2,\ldots,K.
\end{equation}
Moreover, it is easily seen that $\varsigma_K^{(i)}>0$ from which using \eqref{C.7} it follows that the last constraint is always satisfied with strict equality, i.e., $\sum\nolimits_{n=1}^{K}\hat{\lambda}_n^{(i)} = \delta_{K}$.

Similarly, the solutions of $\mathcal P^{(i+1)}$ take the form
\begin{equation}\label{C.6.2}
\; \hat {\lambda}_n^{(i+1)} =    \sqrt{\frac{{w_n}}{{\sum\nolimits_{j=n}^{i+1} \varsigma_j^{(i+1)}+\varsigma_K^{(i+1)}}}}\quad \; {\rm{for}}\; n=1,2,\ldots,i+1 \quad\quad\;
\end{equation}
and
\begin{equation}\label{C.6.3}
\hat {\lambda}_n^{(i+1)} =  \sqrt{\frac{{w_n}}{{\varsigma_K^{(i+1)}}}}\quad \quad {\rm{for}}\; n=i+2,i+3,\ldots,K
\end{equation}
where the Lagrange multipliers $\varsigma_j^{(i+1)}$ are such that
\begin{equation}\label{C.7.1}
0 \le \varsigma_j^{(i+1)} \bot \left(\delta _j - \sum\nolimits_{n=1}^{j}\hat{\lambda}_n^{(i+1)}\right)\ge 0
\end{equation}
for $j=1,2, \ldots,i +1$ and $j = K$.

To proceed further, we focus on $\mathcal P^{(i+1)}$ and consider the two conditions $\varsigma_{i+1}^{(i+1)}>0$ and $\varsigma_{i+1}^{(i+1)}=0$, separately. If $\varsigma_{i+1}^{(i+1)}>0$, from \eqref{C.7.1} it follows that the $(i+1)$th constraint of $\mathcal P^{(i+1)}$ is satisfied with equality, i.e., 
\begin{equation}\label{C.8.1}
\sum\nolimits_{n=1}^{i+1}\hat \lambda_n^{(i)} = \delta _{i+1}.
\end{equation}
Also, since $\sum\nolimits_{n=1}^{K}\hat{\lambda}_n^{(i+1)} = \delta_{K}$ we may write
\begin{equation}
\label{ }
\sum\nolimits_{n=i+2}^{K}\hat{\lambda}_n^{(i+1)}=\sum\nolimits_{n=1}^{K}\hat{\lambda}_n^{(i+1)} -\sum\nolimits_{n=1}^{i+1}\hat{\lambda}_n^{(i+1)}=\delta_{K}-\delta_{i+1}.
\end{equation}
Recalling \eqref{nvsK} yields $\hat{\lambda}^{(i+1)}_{n}=\hat{\lambda}^{(i+1)}_{K}\frac{\sqrt{w_{n}}}{\sqrt{w_{K}}}$ for $n=i+2,\ldots,K$ from which we may write
\begin{equation}
\label{C.eq}
\hat{\lambda}_K^{(i+1)}=\sqrt{w_{K}}\frac{\delta_{K}-\delta_{i+1}}{\sum\nolimits_{n=i+2}^{K}\sqrt{w_{n}}}.
\end{equation}
In addition, it is easily recognized that if $\sum\nolimits_{n=1}^{i+1}\hat \lambda_n^{(i+1)} = \delta _{i+1}$ then 
\begin{equation}\nonumber
\sum\nolimits_{n=1}^{i+1} \hat \lambda_n^{(i)} \ge \delta _{i+1}
\end{equation}
so that
\begin{equation}\nonumber
\sum\nolimits_{n=i+2}^{K} \hat \lambda_n^{(i)} \le \delta_{K}- \delta _{i+1}.
\end{equation}
Substituting \eqref{nvsK} into the above inequality produces
\begin{equation}\label{C.gt}
\hat \lambda_K^{(i)} \le \sqrt{w_{K}}\frac{\delta_{K}-\delta_{i+1}}{\sum\nolimits_{n=i+2}^{K}\sqrt{w_{n}}}. 
\end{equation}
Putting \eqref{C.eq} and \eqref{C.gt}, it is found that when $\varsigma_{i+1}^{(i+1)}>0$ then
\begin{equation}\nonumber
\hat{\lambda}^{(i+1)}_{K}=\max\left(\hat{\lambda}^{(i)}_{K}, \sqrt{w_{K}}\frac{\delta_{K}-\delta_{i+1}}{\sum\nolimits_{n=i+2}^{K}\sqrt{w_{n}}}\right).
\end{equation}
On the other hand, if $\varsigma_{i+1}^{(i+1)}=0$ from (\ref{C.7.1}) we have 
\begin{equation}\nonumber
\sum\nolimits_{n=1}^{i+1}\hat \lambda_n^{(i+1)} < \delta _{i+1}. 
\end{equation}
Accordingly, we may write
\begin{equation}
\label{ }\nonumber
\sum\nolimits_{n=i+2}^{K}\hat \lambda_n^{(i+1)} > \delta_{K}-\delta _{i+1}
\end{equation}
and
\begin{equation}\label{C.12.1}
\hat \lambda_K^{(i+1)} > \sqrt{w_{K}}\frac{\delta_{K}-\delta_{i+1}}{\sum\nolimits_{n=i+2}^{K}\sqrt{w_{n}}}.
\end{equation}
In this case, it is easily seen from \eqref{C.6} -- \eqref{C.7} and \eqref{C.6.2} -- \eqref{C.7.1} that $\mathcal{P}^{(i)}$ and $\mathcal{P}^{(i+1)}$ have the same solution. In particular, this means that $\hat{\lambda}_K^{(i)} = \hat{\lambda}_K^{(i+1)}$. 

Collecting the above results togheter, it turns out that 
\begin{equation}\label{i+1vsi}
\hat{\lambda}^{(i+1)}_{K}=\max\left(\hat{\lambda}^{(i)}_{K}, \sqrt{w_{K}}\frac{\delta_{K}-\delta_{i+1}}{\sum\nolimits_{n=i+2}^{K}\sqrt{w_{n}}}\right)
\end{equation}
from which it follows that the solutions $\hat{\lambda}^{(i)}_{K}$ for any $i \le K-1$ can be easily computed once $\hat \lambda_K^{(0)}$ is given. The latter is easily obtained from \eqref{i+1vsi} and reads
\begin{equation}
\label{ }\nonumber
\hat{\lambda}_n^{(0)}=\sqrt{w_{n}}\frac{\delta_{K}}{\sum\nolimits_{j=1}^{K}\sqrt{w_{j}}} \quad \quad \mathrm{for} \quad n=1,2,\ldots,K
\end{equation}
where we have defined $\delta_0 \triangleq 0$. Using the above result and applying repeatedly \eqref{i+1vsi}, it turns out that the solution of interest given by $\hat {\lambda}_{K} = \hat{\lambda}_{K}^{(K-1)}$ can be determined in closed-form as follows
\begin{equation}
\label{lambdaK}
\hat {\lambda}_{K}=\sqrt{w_{K}} \underset{j=0,\ldots,K-1}{\max}\frac{\delta_{K}-\delta_{j}}{\sum\nolimits_{n=j+1}^{K}\sqrt{w_{n}}}.
\end{equation}

Once $\hat {\lambda}_{K}$ has been computed through \eqref{lambdaK}, ${\lambda}^\star_{K}$ can be determined by applying the following lemma.

\begin{lemma} The solution ${\lambda}^\star_{K}$ is always such that 
\begin{equation}\label{U.1}
{\lambda}^\star_{K}=\min(1,\hat{\lambda}_{K}).
\end{equation}
\end{lemma}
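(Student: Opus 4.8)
The plan is to exploit that problem \eqref{3.44} is nothing but the relaxed problem \eqref{C.2} augmented with the box constraints $\lambda_n \le 1$ (the ordering constraints $\lambda_n \le \lambda_{n+1}$ being automatically met, exactly as argued for \eqref{3.44} in the main text). Since the objective $\sum_n w_n/\lambda_n$ is strictly convex on $\{\lambda_n > 0\}$ and the feasible set is convex, \eqref{C.2} admits a \emph{unique} minimizer $\hat{\boldsymbol{\lambda}}$; moreover, its structure in \eqref{C.6}--\eqref{C.6.1}, together with $w_n \le w_{n+1}$, shows that $\hat{\boldsymbol{\lambda}}$ is increasingly ordered, so that $\hat\lambda_K = \max_n \hat\lambda_n$ is the quantity computed in closed form in \eqref{lambdaK}. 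I would then split the proof into the two cases $\hat\lambda_K \le 1$ and $\hat\lambda_K > 1$, matching the two branches of \eqref{U.1}.

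For the first case ($\hat\lambda_K \le 1$), the argument is immediate: because $\hat{\boldsymbol{\lambda}}$ is increasingly ordered, every component satisfies $\hat\lambda_n \le \hat\lambda_K \le 1$, so $\hat{\boldsymbol{\lambda}}$ is feasible for \eqref{3.44}. As $\hat{\boldsymbol{\lambda}}$ already minimizes the same objective over the larger feasible set of \eqref{C.2}, it is a fortiori optimal for \eqref{3.44}; hence $\boldsymbol{\lambda}^\star = \hat{\boldsymbol{\lambda}}$ and $\lambda_K^\star = \hat\lambda_K = \min(1,\hat\lambda_K)$.

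The delicate case is $\hat\lambda_K > 1$, where I must show $\lambda_K^\star = 1$. Since $\lambda_K^\star \le 1$ holds by feasibility, it suffices to rule out $\lambda_K^\star < 1$, and I would do so by contradiction. The crucial observation is that $\boldsymbol{\lambda}^\star$ is itself increasingly ordered, so if $\lambda_K^\star < 1$ then $\lambda_n^\star \le \lambda_K^\star < 1$ for \emph{all} $n$; that is, none of the box constraints $\lambda_n \le 1$ is active. By complementary slackness their multipliers vanish, and the KKT system of \eqref{3.44} collapses exactly onto the KKT system of the relaxed problem \eqref{C.2}. Because \eqref{C.2} is convex with a unique minimizer, these conditions force $\boldsymbol{\lambda}^\star = \hat{\boldsymbol{\lambda}}$, whence $\lambda_K^\star = \hat\lambda_K > 1$ --- contradicting $\lambda_K^\star < 1$. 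Therefore $\lambda_K^\star = 1$, which combined with the first case yields \eqref{U.1}.

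I expect the main obstacle to be precisely this last step: recognizing that the upper-bound constraints can be active only at the top of the increasing ordering, so that the hypothesis $\lambda_K^\star < 1$ deactivates \emph{all} of them at once. Once this is in place, the reduction of the full KKT conditions to those of \eqref{C.2}, together with uniqueness of the convex minimizer, closes the argument without any explicit perturbation or exchange computation. A minor point I would check carefully is that $\boldsymbol{\lambda}^\star$ remains feasible for \eqref{C.2} (it satisfies the partial-sum constraints and $\lambda_n > 0$), so that identifying it as the \eqref{C.2} optimizer is legitimate.
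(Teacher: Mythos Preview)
Your argument is correct and follows essentially the same route as the paper: both proofs hinge on the ordering of the optimizer and on the observation that, once all box constraints $\lambda_n\le 1$ are inactive, the KKT systems of \eqref{3.44} and of the relaxed problem \eqref{C.2} coincide, so uniqueness forces $\boldsymbol{\lambda}^\star=\hat{\boldsymbol{\lambda}}$. The only cosmetic difference is the case split: the paper conditions on whether $\lambda_K^\star<1$ or $\lambda_K^\star=1$, whereas you condition on whether $\hat\lambda_K\le 1$ or $\hat\lambda_K>1$; your first case is handled by a direct feasibility/relaxation argument rather than via KKT, which is slightly cleaner but otherwise equivalent.
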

\begin{proof}
Using the KKT optimality conditions of (\ref{3.44}) it is seen that
\begin{equation}\label{U.2}
{\lambda}^\star_{n} =    \sqrt{\frac{{w_n}}{{\sum\nolimits_{j=n}^{K} \zeta_j} + \gamma_n}}\quad \; {\rm{for}}\; n=1,2,\ldots,K \quad\quad\;
\end{equation}
where the Lagrange multipliers $\{\zeta_j\}$ and $\{\gamma_n\}$ must be chosen to satisfy the following constraints:
\begin{equation}\label{U.4}
0 \le \zeta_j \, \bot \left(\delta_j - \sum\nolimits_{n=1}^{j}{\lambda}^\star_{n}\right)\ge 0 \quad j=1,2,\ldots,K
\end{equation}
and
\begin{equation}\label{U.5}
0 \le \gamma_n \, \bot \left({\lambda}^\star_n - 1\right)\ge 0 \quad n=1,2,\ldots,K.
\end{equation}
In order to prove \eqref{U.1}, the two cases ${\lambda}^\star_K < 1$ and ${\lambda}^\star_K = 1$ are considered separately. Assume first ${\lambda}^\star_K < 1$. Then, from  \eqref{U.5} we have $\gamma_K = 0$. In addition, since the solution of (\ref{3.44}) is such that ${\lambda}^\star_n \le {\lambda}^\star_{n+1}$ for $n=1,2,\ldots,K-1$, it follows that $\lambda^{\star}_{n}<1$ and $\gamma_n = 0$ $\forall n$. Accordingly,  
\eqref{U.2} -- \eqref{U.4} become formally equivalent to the KKT conditions of problem \eqref{C.2} (see \eqref{C.6} -- \eqref{C.7}  with $i=K-1$) meaning that the two problems have the same solution. In particular, ${\lambda}^\star_K =\hat{\lambda}_K=\min(1,\hat{\lambda}_K)$. Assume now ${\lambda}^\star_K = 1$. In this case, it is easily recognized that $\hat{\lambda}_K \ge 1$. Indeed, if by absurd $\hat{\lambda}_{K} < 1$ then the solution of \eqref{C.2} would coincide with the solution of (\ref{3.44}), and ${\lambda}^\star_K$ would be equal to $\hat{\lambda}_{K} $ (less than 1), which contradicts the hypothesis ${\lambda}^\star_K = 1$. Then, even in this case ${\lambda}^\star_K =1=\min(1,\hat{\lambda}_K)$ and the result in \eqref{U.1} easily follows.
\end{proof}

To see how all the above results can be used to find the solution of (\ref{3.44}), observe that when ${\lambda}_K^{\star}$ has been computed through \eqref{lambdaK} and \eqref{U.1}, the remaining solutions $\lambda_n^\star$ for $n=1,2,\ldots,K-1$ can be found by solving the following problem
\begin{equation}\label{C.27.000}
    \begin{array}{clc}
      \underset{\{0 <\lambda_n \, \le \, 1\}}{\min} &
\sum\nolimits_{n=1}^{K-1}\dfrac{w_n}{\lambda_n}
&  \\
      \mathrm{s.t.} & \sum\nolimits_{n=1}^{j}\lambda_n \le \delta_{j} & j=1,2,\ldots,K-1 \\ 
      & \sum\nolimits_{n=1}^{K-1}\lambda_n \le \delta_{K} -{\lambda}_K^{\star}.
    \end{array}
\end{equation}
Recalling that $\delta _j = \sum \nolimits_{n=1}^{j} \eta_n$ we may write
\begin{equation}\nonumber
   \delta_{K} - {\lambda}_K^{\star} = \sum \nolimits_{n=1}^{K} \eta_n - {\lambda}_K^{\star}.
   \end{equation}
Assume now ${\lambda}_K^{\star} = 1$. Observing that $0< \eta_n \le 1$ we have 
\begin{equation}\nonumber
   \delta_{K} - {\lambda}_K^{\star} \le \delta_{K-1}.
   \end{equation}
The same result can be obtained when ${\lambda}_K^{\star} =\hat \lambda_K < 1$ using \eqref{lambdaK}. This means that \eqref{C.27.000} reduces to:
\begin{equation}\label{red1}
    \begin{array}{clc}
      \underset{\{0 <\lambda_n \, \le \, 1\}}{\min} &
\sum\nolimits_{n=1}^{K-1}\dfrac{w_n}{\lambda_n}
&  \\
      \mathrm{s.t.} & \sum\nolimits_{n=1}^{j}\lambda_n \le \delta^{\prime}_{j} & j=1,\ldots,K-1
    \end{array}
\end{equation}
where 
\begin{equation}\nonumber
\delta^{\prime}_{j} = \left\{ {\begin{array}{*{20}c}
   \delta_{j} & \quad {\rm{for}}\;j=1,2,\ldots,K-2  \\ 
   \delta_{K} -\lambda^{\star}_{K} & \;{\rm{for}}\; j=K-1. \quad \quad \quad\quad\\
\end{array}} \right.
\end{equation}
%
%

We now proceed computing $\lambda_{K-1}^\star$. For this purpose, we follow the same procedure used for the computation of $\lambda_{K}^\star$. We first compute $\hat \lambda_{K-1}$, which is the largest solution of the optimization problem obtained from \eqref{red1} after removing the constraints $\lambda_{n} \le 1$ for $n=1,2,\ldots, K-1$. Paralleling the same steps leading to \eqref{lambdaK} yields 
%
\begin{equation}
\label{}\nonumber
\hat{\lambda}_{K-1}=\sqrt{w_{K-1}} \cdot \underset{j=0,\ldots,K-2}{\max}\frac{\delta^{\prime}_{K-1}-\delta^{\prime}_{j}}{\sum\nolimits_{l=j+1}^{K-1}\sqrt{w_{l}}}
\end{equation}
from which $\lambda_{K-1}^\star$ is obtained as 
\begin{equation}\nonumber
\lambda_{K-1}^\star = \min{(1,\hat{\lambda}_{K-1})}.
\end{equation}
%
Applying repeatedly the same steps for any $n\le K-2$ leads to the iterative procedure illustrated in {\bf{Algorithm 1}} in the text.

\section*{Appendix D}


The convexity of the objective function $Q_n({\theta_{n}})$ in (\ref{func_Q}) on $\mathcal Q = \{\theta_{n}; \theta_n \le 0\}$ is studied. 
To simplify the notation, we omit the index $n$.
Taking the first derivative of $Q\left(\theta\right)$ produces
\begin{equation}\label{F.4}
Q{'}\left(\theta\right) =-\frac{\rho}{\sqrt{\lambda_{H_1}\lambda_{H_2}}} \left(\gamma + \frac{2 - e^{\theta}}{\sqrt{1-e^{\theta}}}\right)e^{-\theta}
\end{equation} 
from which it follows that 
\begin{equation}\nonumber
Q{''}\left(\theta\right) =\frac{\rho}{\sqrt{\lambda_{H_1}\lambda_{H_2}}} \left[\gamma + \frac{e^{2\theta}-6 e^{\theta} +4}{2\big(1-e^{\theta}\big)^{3/2}}\right]e^{-\theta}.
\end{equation} 
Setting $Q{''}\left(\theta\right) = 0$ and letting $\phi=e^{\theta}$ yields $\phi^{2}-6\phi+4= -2\gamma(1-\phi)^{3/2}$
or, equivalently, taking the square of both sides
\begin{equation}\nonumber
\phi^4 - 4\left(3-\gamma^{2}\right)\phi^3+4\left(11-3\gamma^{2}\right)\phi^2 + 4\left(4-\gamma^{2}\right)\left(1-3\phi\right) = 0.
\end{equation}
It turns out that the above equation has a unique solution in the interval $(0,1]$, which is found to be in the form of \eqref{4.18}. As seen, $\phi$ in \eqref{4.18} results to be a monotonic increasing function of $\gamma$. Recalling that $\gamma$ takes values in the interval $[2,\infty)$, we have that the minimum value $\phi_{{\rm{min}}}$ is achieved when $\gamma=2$. Setting $\gamma= 2$ into the above quartic equation produces $\phi^2(\phi^2 + 4\phi-4) = 0$ from which \eqref{4.18.1} is easily found.

On the basis of the above results, it follows that $Q{''}\left(\theta\right)$ is zero in $\mathcal Q$ only for $\theta=\ln \phi$. Moreover, it turns out that $Q{''}\left(\theta\right) > 0$ for $\theta \le \ln \phi$ and $Q{''}\left(\theta\right) < 0$ for $\ln \phi < \theta \le 0$. Then, we may conclude that $Q\left(\theta\right)$ is convex for $\theta \le \ln \phi$ and concave for $\ln \phi < \theta \le 0$.

The point $\psi$ in \eqref{4.29} belongs to the interval $(0,1]$ and it is such that 
\begin{equation}\nonumber
Q(\ln \psi) + Q'(\ln \psi)\left(\theta - \ln \psi\right) = Q'(\ln \psi)\theta
\end{equation}
for $\ln \psi < \theta \le 0$. This amounts to saying that $Q'(\ln \psi)\ln \psi = Q(\ln \psi)$
from which using (\ref{func_Q}) and \eqref{F.4} we obtain
\begin{equation}\nonumber
\gamma_{n} \left(\frac{1}{{\psi}} - 1\right) +  2 \frac{\sqrt{1- {\psi}}}{{\psi}} = -\frac{\ln \psi}{\psi}\left(\gamma + \frac{2 - {\psi}}{\sqrt{1-{\psi}}}\right).
\end{equation}
The above equation can be easily rewritten as in \eqref{4.32} in the text. 

\section*{Appendix E }
The proof of Proposition 2 is divided into the same steps used in Appendix C  for Proposition 1. In the sequel, we report only the major differences and refer to Appendix C for the complete proof. The first step removes the constraints $\theta_n \le 0$ and proceeds computing $\hat {\theta}_{K}$ from $ {\theta}_{K}^\star$ is then obtained using the following lemma.

\begin{lemma} The solution ${\theta}^\star_{K}$ is such that
\begin{equation}\label{Z.1}
{\theta}^\star_{K}=\min(0,\hat {\theta}_{K})
\end{equation}
where $\hat {\theta}_{K}$ is given by 

\begin{equation}\label{H.7.1}
\hat \theta_K= \ln w_K + 
\mathop {\max }\limits_{\ell=0,1,\ldots,K-1} \dfrac{ \delta_K - \delta_{\ell}- \sum\nolimits_{j=\ell+1}^K \ln w_j}{K - \ell}.
%
\end{equation}
with 
\begin{equation}
\delta_{j}= \sum\limits_{n=1}^{j}\kappa_n = \sum\limits_{n=1}^j \ln \eta_n.
\end{equation}
for $j=1,2,\ldots,K$ and $\delta_0 \triangleq 0$.
\end{lemma}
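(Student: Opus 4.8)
The plan is to follow the template of Appendix~C step by step, exploiting the fact that the change of variable $\theta_n = \ln\lambda_n$ turns the multiplicative structure of the linear problem into an additive one. First I would write the Lagrangian of \eqref{4.44} after dropping the ordering constraints $\theta_n \le \theta_{n+1}$ (which the optimum satisfies automatically, as already argued in the text), associating multipliers $\{\zeta_j\}$ to the partial-sum constraints $\sum\nolimits_{n=1}^j \theta_n \le \delta_j$ and $\{\gamma_n\}$ to the box constraints $\theta_n \le 0$. Since the objective $w_n e^{-\theta_n}$ is convex and the feasible set $\mathcal N$ is polyhedral and satisfies Slater's condition, the KKT system is necessary and sufficient. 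Stationarity gives $w_n e^{-\theta^\star_n} = \sum\nolimits_{j=n}^K \zeta_j + \gamma_n$, that is,
\begin{equation}\nonumber
\theta^\star_n = \ln w_n - \ln\Big(\sum\nolimits_{j=n}^K \zeta_j + \gamma_n\Big),
\end{equation}
the exponential counterpart of \eqref{U.2}, complemented by slackness conditions of the form \eqref{U.4}--\eqref{U.5} with $\lambda^\star$ replaced by $\theta^\star$ and the box bound $1$ replaced by $0$.

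Second, I would compute $\hat\theta_K$ by solving the relaxed problem in which the constraints $\theta_n \le 0$ are removed, i.e.\ setting all $\gamma_n = 0$. Paralleling the chain $\mathcal P^{(0)},\ldots,\mathcal P^{(K-1)}$ of Appendix~C, the key observation is that on any block of indices $n > i$ governed only by $\zeta_K$ the stationarity condition forces $\theta_n - \ln w_n = -\ln\zeta_K$ to be constant, which is the additive analog of the proportionality $\hat\lambda_n \propto \sqrt{w_n}$ in \eqref{nvsK}. Combining this with the active last constraint $\sum\nolimits_{n=1}^K \theta_n = \delta_K$ yields, in the case $\varsigma_{i+1}^{(i+1)} > 0$, the candidate
\begin{equation}\nonumber
\ln w_K + \frac{\delta_K - \delta_{i+1} - \sum\nolimits_{n=i+2}^K \ln w_n}{K-i-1},
\end{equation}
and the same case analysis leading to \eqref{i+1vsi} produces the recursion $\hat\theta_K^{(i+1)} = \max\big(\hat\theta_K^{(i)}, \cdot\big)$ with this quantity as the second argument. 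Unrolling the recursion from $i=0$ (the counterpart of \eqref{lambdaK}) and relabelling $\ell = i+1$ gives exactly \eqref{H.7.1}.

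Third, I would establish \eqref{Z.1} by the two-case argument of Lemma~2. If $\theta^\star_K < 0$, then $\gamma_K = 0$ by the slackness condition, and since $\theta^\star_n \le \theta^\star_K < 0$ for all $n$ (using the established ordering), every $\gamma_n$ vanishes; the KKT system then coincides with that of the relaxed problem, so $\theta^\star_K = \hat\theta_K = \min(0,\hat\theta_K)$. If instead $\theta^\star_K = 0$, I would argue by contradiction that $\hat\theta_K \ge 0$: were $\hat\theta_K < 0$, the relaxed solution would already be feasible for \eqref{4.44} and hence optimal, forcing $\theta^\star_K = \hat\theta_K < 0$, a contradiction. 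Thus $\min(0,\hat\theta_K) = 0 = \theta^\star_K$ in both cases.

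I expect the main obstacle to be the second step: verifying that the additive stationarity relation $\theta_n - \ln w_n = \mathrm{const}$ holds within each block controlled only by $\zeta_K$, and that the induction of Appendix~C carries over verbatim once products and square roots are replaced by sums and logarithms, so that the $\max$ over $\ell$ in \eqref{H.7.1} emerges correctly. Everything else---convexity and Slater's condition guaranteeing sufficiency of KKT, and the final two-case split---transfers with only cosmetic changes from the linear case.
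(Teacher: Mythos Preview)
Your proposal is correct and follows essentially the same route as the paper, which simply states that the proof ``follows using the same arguments adopted in Appendix~C'' and omits the details. Your explicit KKT derivation, the block relation $\theta_n-\ln w_n=\mathrm{const}$ replacing the proportionality $\hat\lambda_n\propto\sqrt{w_n}$, and the two-case split mirroring Lemma~2 are exactly the additive/logarithmic translations of Appendix~C that the paper has in mind.
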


\begin{proof} The proof follows using the same arguments adopted in Appendix C and it is not reported for space limitations. 
\end{proof}

Once $\theta_K^\star$ is obtained using \eqref{Z.1} and \eqref{H.7.1}, we proceed computing $\theta_{K-1}^\star$. Paralleling the same steps for $\theta_K^\star$, the latter is found to be
\begin{equation}\label{pippo}
{\theta}^\star_{K-1}=\min(0,\hat {\theta}_{K-1})
\end{equation}
where $\hat {\theta}_{K-1}$ is obtained solving
\begin{equation}\nonumber
    \begin{array}{cll}
      \;\quad \quad \quad \quad \quad\underset{\{\theta_n\}}{\min} &
\sum\nolimits_{n=1}^{K-1} w_n e^{-\theta_n}
&  \\
      \;\;\quad\quad\quad\quad\quad \mathrm{s.t.} & \sum\nolimits_{n=1}^{j} \theta_n \le \delta _j & \!\!\!\!\!\!\!\!\!\!\!\! j=1,2,\ldots,K-1 \\
            \;\;\quad\quad\quad\quad\quad \; & \sum\nolimits_{n=1}^{K-1} \theta_n \le \delta _K - \theta_K^\star.
    \end{array}
\end{equation}
{Observing that $\delta_{K} - \theta_K^\star$ is always such that
\begin{equation}\nonumber
   \delta_{K} - \theta_K^\star \le \delta_{K-1}
   \end{equation}
the above problem reduces to the following one:
\begin{equation}\nonumber
    \begin{array}{clc}
      \underset{\{\theta_n\}}{\min} &
\sum\nolimits_{n=1}^{K-1}{w_n}e^{-\theta_n}
&  \\
      \mathrm{s.t.} & \sum\nolimits_{n=1}^{j}\theta_n \le \delta^{\prime}_{j} & j=1,\ldots,K-1
    \end{array}
\end{equation}
where 
\begin{equation}\nonumber
\delta^{\prime}_{j} = \left\{ {\begin{array}{*{20}c}
   \delta_{j} & \quad {\rm{for}}\;j=1,2,\ldots,K-2  \\ 
   \delta_{K} - \theta_K^\star& \;{\rm{for}}\; j=K-1. \quad \quad \quad\quad\\
\end{array}} \right.
\end{equation}
%
%
The solution $\hat \theta_{K-1}$ in \eqref{pippo} has the same form of \eqref{H.7.1} once $K$ is replaced with $K-1$ and the quantities $\delta_{\ell}$ are replaced with $\delta^{\prime}_{\ell}$ for $\ell =1,2,\ldots,K-1$. Applying repeatedly the same steps for any $n\le K-2$ leads to the iterative procedure illustrated in {\bf{Algorithm 2}} in the text.}

\bibliographystyle{IEEEtran}
\bibliography{IEEEabrv,bibnew}

\end{document}